\documentclass[11pt,english]{article}

% Packages
\usepackage{times}
\usepackage{amsfonts}
\usepackage{amsmath,amssymb}
\usepackage{amsthm}
\usepackage{natbib}
\usepackage{bm}
\usepackage{graphicx}
\usepackage{subcaption}
\usepackage{appendix}
\usepackage{rotating}
\usepackage{float}
\usepackage{authblk}
\usepackage{color}
\usepackage[colorlinks,linkcolor=blue,citecolor=blue,bookmarks=false,pagebackref]{hyperref}
\usepackage[top=1.0in, bottom=1.3in, left=1.15in, right=1.15in]{geometry}
\usepackage{setspace}
\setstretch{1.2}
\usepackage{titling}
\settowidth{\thanksmarkwidth}{*}
\setlength{\thanksmargin}{-\thanksmarkwidth}

\allowdisplaybreaks

% Custom commands
\theoremstyle{plain}
\newtheorem{theorem}{Theorem}[section]
\theoremstyle{plain}
\newtheorem{lemma}{Lemma}[section]
\theoremstyle{plain}

\theoremstyle{plain}

\theoremstyle{plain}
\newtheorem{corollary}{Corollary}[section]
\theoremstyle{remark}

\theoremstyle{definition}

\newcommand{\argmin}{\operatornamewithlimits{argmin\,}}

\newcommand{\was}{\mathcal{W}}
\newcommand{\prox}{\text{prox}}

\mathchardef\mhyphen="2D

\newenvironment{customthm}[1]
  {\innercustomthm}
  {\endinnercustomthm}

\floatstyle{ruled}
\newfloat{algorithm}{tbp}{loa}
\providecommand{\algorithmname}{Algorithm}
\floatname{algorithm}{\protect\algorithmname}

\title{Langevin Monte Carlo and JKO splitting}
\author{Espen Bernton\thanks{Department of Statistics, Harvard University, USA.
Email: ebernton@g.harvard.edu.}}
\date{}

\begin{document}

\maketitle

\begin{abstract}
\noindent Algorithms based on discretizing Langevin diffusion are popular tools for sampling from high-dimensional distributions. We develop novel connections between such Monte Carlo algorithms, the theory of Wasserstein gradient flow, and the operator splitting approach to solving PDEs. In particular, we show that a proximal version of the Unadjusted Langevin Algorithm corresponds to a scheme that alternates between solving the gradient flows of two specific functionals on the space of probability measures. Using this perspective, we derive some new non-asymptotic results on the convergence properties of this algorithm.
\vskip0.5cm
\noindent {\bf Keywords:} Langevin Monte Carlo, Fokker--Planck, Wasserstein gradient flow, operator splitting, proximal operators
\end{abstract}

\section{Introduction}

In this paper, we shed new light on Langevin-based Monte Carlo algorithms by drawing connections to the Wasserstein gradient flow literature and the operator splitting approach to solving PDEs. In a seminal paper, \citet{jordan1998variational} expressed the solution of the Fokker--Planck equation as the gradient flow of the relative entropy functional (otherwise known as the KL-divergence) with respect to the $2$-Wasserstein distance. Their constructive proof used a time discretization approach that has since become known as the JKO scheme. We show that applying the JKO scheme in conjunction with a splitting approach to solving the Fokker--Planck equation reduces to a proximal version of the Unadjusted Langevin Algorithm. Our proofs rely heavily on the theory developed by \citet{ambrosio2005}, and have the benefit of holding for potentials that are not necessarily differentiable. In turn, this allows us to provide some new results regarding the convergence of the algorithm. Our work is related to \citet{durmus2016efficient}, and we will make comparisons to their theoretical results.

To motivate the use of Langevin-based Monte Carlo algorithms, consider a log-concave target distribution $\pi$, given in terms of the Lebesgue density $\pi(x) = Z^{-1}e^{-V(x)}$, where $V:\mathbb{R}^d \to \mathbb{R}$ is a convex function, $d\in \mathbb{N}$ is an integer, and $Z$ is the normalizing constant. In the case where $V$ is differentiable, we can associate with it the Langevin diffusion, given in terms of the It\^o stochastic differential equation
\begin{equation}\label{eq:langevin}
dX(t) = -\nabla V(X(t))dt +  \sqrt{2}dW(t), \quad X(0) = X_0 \sim \rho_0.
\end{equation}
It represents the position $X(t) \in \mathbb{R}^d$ of a particle at time $t >0$, initialized at the random location $X_0 \sim \rho_0$, with drift according to the gradient of the potential $V$ and subject to random perturbations $dW(t)$. The process $W(t)$ is the standard Wiener process. The density of $X(t)$ at time $t$, written $\rho(t)$, satisfies the linear Fokker--Planck equation:
\begin{equation}\label{eq:fp}
\frac{d\rho}{dt} = \text{div}(\rho \nabla V) + \Delta \rho, \quad \rho(0) = \rho_0.
\end{equation}

A classical result says that under quite weak convexity and smoothness conditions on $V$, the unique stationary solution of \eqref{eq:fp} is equal to $\pi$, and that convergence to $\pi$ is exponentially fast \citep[see for example][Chapter 4]{pavliotis2014stochastic}. These attractive properties have spawned a range of sampling algorithms targeting $\pi$ based on time discretizations of the process in \eqref{eq:langevin}. Notably, the Unadjusted Langevin Algorithm (ULA) and its Metropolis adjusted counterpart MALA have received much attention.

The Unadjusted Langevin Algorithm is simply an explicit Euler discretization of \eqref{eq:langevin}: for a time-step $h> 0$ and for $k \geq 0$,
\begin{equation}
X_h^{k+1} = X_h^{k} - h\nabla V(X_h^{k}) + \sqrt{2h}\eta^{k+1}, \quad X_h^{0} = X_0,
\end{equation} 
where $(\eta^{k})_{k\geq 1}$ is a sequence of independent $\mathcal{N}(0,\mathcal{I}_d)$ random variables and $\mathcal{I}_d$ is the $d$-dimensional identity matrix. In MALA, $X_h^{k+1}$ is either accepted or rejected in a Metropolis step with the purpose of removing the asymptotic bias of ULA stemming from discretization error.

Originating with \citet{roberts1996exponential}, there has been a lot of interest in quantifying the performance of these algorithms, with early work primarily focusing on MALA \citep[see e.g.][]{jarner2000geometric,roberts2002langevin,pillai2012optimal,xifara2014langevin}. It was not until \citet{dalalyan2014theoretical}, who gave precise bounds for the total variation distance between the law of $X_h^{k}$ and $\pi$ in terms of $d, k$, and $h$, that ULA garnered similar attention. His results were further improved and extended to other metrics and discrepancies by \citet{durmus2016sampling,durmus2017nonasymptotic, cheng2017convergence, dalalyan2017further}. For instance, \citet{dalalyan2017user} show that if $V$ is strongly convex and has Lipschitz continuous gradient, then  $\Omega(d/\varepsilon^2)$ iterations are sufficient for ULA to achieve an error of $\varepsilon$ in the $2$-Wasserstein distance. Similar results also hold in situations where only a (sufficiently regular) approximation of the gradient is available.

In what follows, we will view Langevin-based Monte Carlo through the lens of Wasserstein gradient flow, and show that this perspective can lead to interesting results on the computational complexity of such algorithms. Wasserstein gradient flow was also used by \citet{cheng2017convergence} as a theoretical tool to study ULA, but our approach makes closer connections to the operator splitting literature, and as such leads to different results. We hope that further connections can have methodological implications in these fields, by considering the wide variety of JKO schemes, splitting schemes, and Langevin Monte Carlo algorithms that exist. 

The rest of this paper is structured as follows. Section \ref{sec:notation} defines the notation and states some important definitions, Section \ref{sec:wasserstein_gradient_flow} reviews some concepts from the Wasserstein gradient flow literature, Section \ref{sec:operator_splitting} briefly discusses the operator splitting approach to solving PDEs, Section 4 establishes connections between Wasserstein gradient flow, operator splitting and Langevin Monte Carlo and includes some convergence results on the proximal version of the ULA algorithm, and Section 5 concludes. Proofs are given in the Appendix.

\subsection{Notation and definitions}\label{sec:notation}
Let $\|\cdot\|_p$ be the $\ell_p$-norm on $\mathbb{R}^d$, unless $p = 2$, in which case it reduces to the Euclidean distance and is denoted by $\|\cdot\|$. Define $\mathcal{P}_2(\mathbb{R}^d)$ to be the set of probability measures on $\mathbb{R}^d$ with finite second moments with respect to the Euclidean distance. The $2$-Wasserstein distance is a metric on $\mathcal{P}_2(\mathbb{R}^d)$, and is for any $\mu,\nu \in \mathcal{P}_2(\mathbb{R}^d)$ defined by
\begin{equation}
\was_2(\mu,\nu) = \left(\inf_{\gamma \in \Gamma(\mu,\nu)} \int_{\mathbb{R}^d\times \mathbb{R}^d} \|x-y\|^2 d\gamma(x,y)\right)^{\frac{1}{2}},
\end{equation}
where $\Gamma(\mu,\nu)$ is the set of all joint distributions with marginals $\mu$ and $\nu$. A desirable feature of the $2$-Wasserstein distance is that 
$\was_2(\mu_n, \mu) \to 0$ as $n \to \infty$ if and only if $\mu_n$ converges weakly to $\mu$ and the corresponding sequence of second moments also converges \citep[Theorem 6.9]{villani2008}.

The entropy and potential energy functionals, $\rho \mapsto \mathcal{H}(\rho)$ and $\rho \mapsto \mathcal{V}(\rho)$  respectively, are given by 
\begin{equation}\label{eq:entropy}
\mathcal{H}(\rho) =
\begin{cases}
\int \log \rho d\rho & \text{for $\rho \ll \mu_{\text{Leb}}$},  \\
+\infty & \text{otherwise},
\end{cases}
\end{equation}
where $\mu_{\text{Leb}}$ denotes the Lebesgue measure on $\mathbb{R}^d$, and
\begin{equation}\label{eq:potential_energy}
\mathcal{V}(\rho) = \int V d\rho.
\end{equation}
The relative energy functional $\rho \mapsto \mathcal{H}(\rho | \pi)$, also called the KL-divergence, is given by
\begin{equation}\label{eq:relative_entropy}
\mathcal{H}(\rho | \pi)  =  \mathcal{H}(\rho) + \mathcal{V}(\rho) + \log Z.
\end{equation}

An important concept in optimal transport, which will play a significant role later, is the notion of displacement convexity. A functional $\rho \mapsto \mathcal{F}(\rho)$ is
said to be $\lambda$-displacement convex for some $\lambda \in\mathbb{R}$ if, for all $t\in [0,1]$,
\begin{equation}
\mathcal{F}(\mu_t) \leq (1-t)\mathcal{F}(\mu_0) + t\mathcal{F}(\mu_1) - \frac{\lambda}{2}t(1-t)\was_2^2(\mu_0,\mu_1)
\end{equation}
for any constant speed geodesic $\mu:[0,1]\to \mathcal{P}_2(\mathbb{R}^d)$. A curve $\mu:[0,1]\to \mathcal{P}_2(\mathbb{R}^d)$ is a  constant speed geodesic if, for any $0\leq s\leq t\leq 1$, we have that $\was_2(\mu_s,\mu_t) = (t-s)\was_2(\mu_0,\mu_1).$

We use the following notation for the density of a Gaussian distribution with zero mean and covariance matrix $2t\mathcal{I}_d$:
\begin{equation}\label{eq:gaussian_kernel}
\phi_t(x) = \frac{1}{(4\pi t)^{d/2}}\exp\left(- \frac{\|x\|^2}{4t}\right).
\end{equation}

By a Markov operator, we mean a linear functional $R$ that maps the set of non-negative Lebesgue integrable functions into itself. A family of Markov operators $(R_t)_{t\geq0}$ is called a Markov semigroup if $R_0$ is the identity map, $R_{t+s} = R_tR_s$ for any $s,t\geq 0$, and the map $t\mapsto R_t f$ is continuous for any non-negative and Lebesgue integrable $f$.

\section{Wasserstein gradient flow} \label{sec:wasserstein_gradient_flow}
The theory of gradient flows in the space of probability measures was pioneered by Ambrosio, Gigli and Savar\'e in their book \citet{ambrosio2005},  generalizing the variational structure \citet{jordan1998variational} had used to describe the diffusion and Fokker--Planck equations. With Langevin Monte Carlo in mind, we provide only a brief introduction to this theory, and refer to the aforementioned references and the accessible review of \citet{santambrogio2016euclidean} for further details.

We first consider continuous time flows, which will lead to a useful perspective on generalizations of the continuous time processes in \eqref{eq:langevin} and \eqref{eq:fp}. Secondly, we consider the time discretizations through which the existence and uniqueness of gradient flows are typically established. Although they were originally introduced as theoretical tools in the literature, it will later become clear that Langevin Monte Carlo in fact numerically approximates such a time discretization.

\subsection{Continuous time flows}
In Euclidean space, a curve $x:[0,\infty) \to \mathbb{R}^d$ is the gradient flow, or steepest descent, of a differentiable function $f:\mathbb{R}^d \to \mathbb{R}$ if
\begin{equation}\label{eq:grad_flow_euclidean}
\frac{dx}{dt} = -\nabla f(x), \quad {x(0) = x_0}.
\end{equation}
By analogy, one can interpret the gradient flow of a functional $\mathcal{F} :\mathcal{P}_2(\mathbb{R}^d) \to \mathbb{R}$ to be a curve $\rho:[0,\infty) \to \mathcal{P}_2(\mathbb{R}^d)$ that satisfies
\begin{equation}\label{eq:grad_flow_wasserstein}
\frac{d\rho}{dt} = -\nabla_{\was_2} \mathcal{F}(\rho), \quad {\rho(0) = \rho_0},
\end{equation}
for some generalized notion of gradient $\nabla_{\was_2}$, in terms of the $\was_2$ metric. For sufficiently regular $\rho$ and $\mathcal{F}$, $\nabla_{\was_2} \mathcal{F}(\rho)$ corresponds to  $-\text{div}(\rho \nabla\frac{ \delta\mathcal{F}}{\delta \rho})$, where $\delta\mathcal{F} /\delta \rho$ is the first variation of $\mathcal{F}$. Applied to the functional of interest, namely $\mathcal{F}(\rho) = \mathcal{H}(\rho | \pi)$, one has that $\delta\mathcal{F} /\delta \rho = V + \log \rho + 1$. Thus, if $V$ is differentiable one recovers \eqref{eq:fp} \citep[see e.g.][Lemma 10.4.1]{ambrosio2005}.

Due to the technically challenging nature of defining Wasserstein gradients this way when $V$ is not differentiable, we instead adopt the definition given in \citet{ambrosio2009existence}, inspired by the characterization of gradient flows in terms of evolution variational inequalities (EVIs) shown in \citet[][Theorem 11.1.4]{ambrosio2005}. In particular, we say that a continuous curve $\rho : (0,+\infty) \to \mathcal{P}_2(\mathbb{R}^d)$ is a gradient flow of a $\lambda$-displacement convex functional $\mathcal{F}$ if
\begin{equation}\label{eq:grad_flow}
\frac{d}{dt} \frac{1}{2} \was_2^2(\rho(t),\nu) + \frac{\lambda}{2}\was_2^2(\rho(t),\nu) + \mathcal{F}(\rho(t)) \leq \mathcal{F}(\nu),
\end{equation}
holds in the sense of distributions, for all $\nu\in \mathcal{D}(\mathcal{F}) = \{\mu \in \mathcal{P}_2(\mathbb{R}^d) : \mathcal{F}(\mu) < +\infty\}$. The flow is said to start from $\rho_0$ if $\was_2(\rho(t),\rho_0) \to 0$ as $t \to 0$. Here, ``in the sense of distributions'' means that for all infinitely differentiable and compactly supported test functions, denoted $f \in C_c^\infty((0,\infty); \mathbb{R})$, such that $f\geq 0$, we have
\begin{equation} \label{eq:sense_of_distributions}
-\frac{1}{2}\int_0^\infty\was_2^2(\rho(t),\nu)f^\prime(t) dt \leq \int_0^\infty \left[\mathcal{F}(\nu)- \mathcal{F}(\rho(t)) - \frac{\lambda}{2}\was_2^2(\rho(t),\nu) \right]f(t) dt.
\end{equation}
The connection between \eqref{eq:grad_flow} and \eqref{eq:sense_of_distributions} can be seen by imagining the left hand side of \eqref{eq:sense_of_distributions} being integrated by parts. 

One of the most attractive features of gradient flows are their convergence properties. For any $\lambda$-displacement convex functional $\mathcal{F}$ with $\lambda>0$, the map $\rho \mapsto \mathcal{F}(\rho)$ has a unique minimum $\bar{\rho}$, and Theorem 11.2.1 of \citet{ambrosio2005} states that there exists a unique gradient flow $t\mapsto \rho(t)$, which satisfies
\begin{equation}\label{eq:contraction_F}
\was_2(\rho(t),\bar{\rho}) \leq \was_2(\rho_0,\bar{\rho})e^{-\lambda t} \quad \text{and}\quad \mathcal{F}(\rho(t)) - \mathcal{F}(\bar{\rho})  \leq \left[\mathcal{F}(\rho_0) - \mathcal{F}(\bar{\rho})\right]e^{-2\lambda t},
\end{equation}
or any $t\geq 0$. Convergence results also exist in the case where $\lambda = 0$, but do not yield the exponential convergence observed above. 

This result can be applied to the relative entropy by making the following observations: when $V$ is $\lambda$-strongly convex with $\lambda > 0$, it follows that $\rho\mapsto\mathcal{V}(\rho)$ is $\lambda$-displacement convex \citep[Proposition 9.3.2]{ambrosio2005}. In turn, this implies that $\rho\mapsto \mathcal{H}(\rho | \pi)$ is $\lambda$-displacement convex. Recall that $\mathcal{H}(\rho | \pi) \geq 0$ for any $\rho$, and that $\rho \mapsto \mathcal{H}(\rho | \pi)$ is uniquely minimized at $\pi$ due to the strict convexity of the function $x \mapsto x \log x$ for $x > 0$ appearing in $\mathcal{H}(\rho)$, and Jensen's inequality. The result in \eqref{eq:contraction_F} can then be formulated as
\begin{equation}\label{eq:contraction}
\was_2(\rho(t),\pi) \leq \was_2(\rho_0,\pi)e^{-\lambda t} \quad \text{and}\quad \mathcal{H}(\rho(t) | \pi) \leq \mathcal{H}(\rho_0 | \pi)e^{-2\lambda t}.
\end{equation}
This is a more general statement of the exponential convergence to $\pi$ of the solution to the Fokker--Planck equation mentioned in the introduction, and is as such one of the main motivations for studying Langevin Monte Carlo algorithms.

\subsection{Time discretized flows}
An important theoretical tool in establishing the existence of  gradient flows is the minimizing movement scheme, often also called the JKO scheme. For a time-step $h>0$, $k \geq 0$, and $\rho_h^0 = \rho_0$, consider the iterated minimization problems
\begin{equation}\label{eq:jko}
\rho_h^{k+1} = \argmin_{\rho \in \mathcal{P}_2(\mathbb{R}^d)} \mathcal{F}(\rho) + \frac{1}{2h}\was_2^2(\rho,\rho_h^k).
\end{equation}
Such minimizers exist and are unique under weak assumptions, such as lower semi-continuity and (strong) displacement convexity of $\mathcal{F}$ \citep[see e.g.][Proposition 4.2]{ambrosio2009existence}. Both of these conditions hold for the relative entropy functional $\rho\mapsto \mathcal H( \rho | \pi)$ when $V$ is convex: the first property holds in more generality and is well-known, whereas the second was proved in \citet{mccann1997convexity}. 

In the Euclidean setting, the sequence $(x_h^k)_{k\geq 0}$ is an implicit Euler discretization with step-size $h$ of the gradient flow of $f:\mathbb{R}^d \to \mathbb{R}$ given in \eqref{eq:grad_flow_euclidean} with initial condition $x_h^0 = x_0$ if 
\begin{equation}\label{eq:implicit_euler}
x_h^{k+1} = \argmin_{y \in \mathbb{R}^d} f(y) + \frac{1}{2h}\|x_h^{k}-y\|^2.
\end{equation}
The map defined by the right hand side of \eqref{eq:implicit_euler} is often written $\prox_f^h(x_h^k)$ in the optimization literature, and is referred to as the proximal operator \citep[see e.g.][]{parikh2014proximal}.

By analogy, the JKO scheme \eqref{eq:jko} can be seen as an implicit Euler discretization of the flow in \eqref{eq:grad_flow_wasserstein}. It was this time discretization scheme applied to the functional $\rho \mapsto \mathcal{H}(\rho | \pi)$ that \citet{jordan1998variational} employed, showing that the interpolation 
\begin{equation}\label{eq:interpolation}
\rho^h(t) = \rho_h^{k+1} \quad \text{for $t \in (kh, (k+1)h]$}
\end{equation}
converges (in some formal sense) to the solution of the Fokker--Planck equation as $h \to 0$, in the case where $V$ is smooth and satisfies certain growth conditions. 

Building on results by \citet{cepa1998problame}, \citet{ambrosio2009existence} used a minimizing movement scheme to show existence and uniqueness of the gradient flow of the relative entropy functional given any convex $V$. In particular, they show that there exists a semigroup $(P_t)_{t\geq 0}$ and a unique Markov family $\{\mathbb{P}_x : x \in \mathbb{R}^d\}$ of probability measures on $(\mathbb{R}^d)^{[0,+\infty)}$ such that $\mathbb{E}_x f(X_t) = P_t f(x)$ for all bounded Borel functions $f$ and all $x \in\mathbb{R}^d$. Moreover, it is shown that $\{\mathbb{P}_x : x \in \mathbb{R}^d\}$ is reversible with respect to $\pi$, and that $\pi$ is uniquely invariant for $(P_t)_{t\geq 0}$. Restricting $(P_t)_{t\geq 0}$ to indicator functions of Borel sets $B \in \mathcal{B}(\mathbb{R}^d)$, we define $(R_t)_{t\geq 0}$ by $R_t \rho_0(B) = \int P_t 1_B d\rho_0$. The process $\rho(t) = R_t \rho_0$ then uniquely satisfies \eqref{eq:grad_flow} and the associated properties outlined in the previous section.

After originally being introduced as a theoretical tool, there has recently been interest in developing numerical implementations of the JKO scheme for solving PDEs. Several Eulerian grid-based approaches exist, see e.g. \citet{burger2012regularized,carrillo2015finite,peyre2015entropic}. By virtue of being grid-based, these have limited application in the high-dimensional sampling setting. 

It will later be seen that Langevin-based Monte Carlo can be considered a Lagrangian scheme using a particle approximation to the gradient flow. Other Lagrangian approaches have been considered by e.g. \citet{carrillo2015numerical,benamou2016discretization,carrillo2017blob}. These methods are typically adapted to accurately solving PDEs in two or three dimensions, and do not scale well with $d$. For instance, \citet{carrillo2017blob} used the modified relative entropy functional
\begin{equation}
\mathcal{F}_\gamma(\rho) =  \int \log (\phi_\gamma * \rho) d\rho + \int V d\rho + \log Z,
\end{equation}
where $\varphi_\gamma = \gamma^{-d} \varphi(x/\gamma)$ denotes a mollifier, typically a Gaussian kernel with standard deviation $\gamma>0$.
This modification makes the functional well-behaved when evaluated at an empirical measure, with the first term providing a kernel-based estimate of the entropy of the underlying distribution. For small time steps $h$, their algorithm reduces to solving a system of ODEs to evolve the particles in the empirical measure. The application of this approach to the high-dimensional setting is limited by the kernel-based estimate of entropy.

\section{Operator splitting}\label{sec:operator_splitting}
In the previous section, we alluded to the idea that Langevin Monte Carlo numerically approximates the time discretizations used to theoretically study Wasserstein gradient flows. Before making this connection clear, we first need to introduce the concept of operator splitting. 

Consider the generic Cauchy problem
\begin{equation}\label{eq:cauchy_problem}
\frac{df}{dt} = \mathcal{A}(f), \quad f(0) = f_0,
\end{equation}
with solution given by $f(t) = S_t f_0$ in semigroup notation. In many situations, the operator $\mathcal{A}$ can be split into the sum of two simpler operators: $\mathcal{A} = \mathcal{A}_1 + \mathcal{A}_2$. Let $f_j(t) = S^j_t f_0$ for $j = 1, 2$ denote the solutions to the problems
\begin{equation}\label{eq:cauchy_problem_split}
\frac{df_j}{dt} = \mathcal{A}_j(f_j), \quad f_j(0) = f_0.
\end{equation}
One can hope to estimate the solution $f$ of \eqref{eq:cauchy_problem} via $f(t) \approx (S^2_{t/n} S^1_{t/n})^n f_0$ for some large positive integer $n$, which can be justified if a Lie--Trotter--Kato product formula of the form 
\begin{equation}\label{eq:trotter}
f(t) = \lim_{n \to +\infty} (S^2_{t/n} S^1_{t/n})^n f_0
\end{equation}
holds. The book of \citet{holden2010splitting} contains a thorough overview of such results.

Returning to the Fokker--Planck equation \eqref{eq:fp}, there is a natural split between the transport part of the equation:
\begin{equation}\label{eq:fp_transport}
\frac{d\rho}{d t} = \text{div}(\rho \nabla V), \quad \rho(0) = \rho_0,
\end{equation}
and the diffusion part:
\begin{equation}\label{eq:fp_heat}
\frac{d \rho}{d t} = \Delta \rho, \quad \rho(0) = \rho_0.
\end{equation}
In his Ph.D. thesis, \citet{stojkovic2011geometric} considers such a split for the Fokker--Planck equation with smooth drift satisfying a monotonicity property, but which is not necessarily a gradient. \citet{bowles2015weak} also consider this split for the fractional Fokker--Planck equation, where the Laplacian in the diffusion equation \eqref{eq:fp_heat} is substituted for a fractional Laplacian. In both of these works, operator splitting is introduced as a theoretical tool to establish the existence of solutions to generalized Fokker--Planck equations, but they do not consider numerical aspects nor the general case of convex $V$.

The splitting interpretation carries over to the Wasserstein gradient flow formulation, where the transport equation \eqref{eq:fp_transport} can be interpreted as the gradient flow of the potential energy functional $\rho \mapsto \mathcal{V}(\rho)$, and the diffusion equation \eqref{eq:fp_heat} can be interpreted as the gradient flow of the entropy functional $\rho \mapsto \mathcal{H}(\rho)$. We now take a brief closer look at these two gradient flows.

\subsection{The transport equation}
In addition to the formulation in \eqref{eq:grad_flow}, the gradient flow of $\rho \mapsto \mathcal{V}(\rho)$ can be characterized by the semigroup $(T_t)_{t\geq 0 }$, induced by the differential inclusion
\begin{equation} \label{eq:transport_map}
\frac{d}{dt}T_t(x)  \in  -\partial V(T_t(x)),  \quad \text{$T_0(x) = x\quad$ for all $x$ s.t. $V(x)<+\infty$}.
\end{equation}
According to Theorem 11.2.3 of \citet{ambrosio2005}, there exists a unique gradient flow of $\rho \mapsto \mathcal{V}(\rho)$ and solution to \eqref{eq:transport_map}. This gradient flow satisfies $\rho(t) = (T_t)_\#\rho_0$, where $(T_t)_\#$ denotes the push-forward map associated with $T_t$.

The corresponding JKO scheme performs minimizations of the form
\begin{equation}
\rho_h^{k+1} =  \argmin_{\rho \in \mathcal{P}_2(\mathbb{R}^d)} \mathcal{V}(\rho) + \frac{1}{2h}\was_2^2(\rho,\rho_h^{k}).
\end{equation}
By the proof of Proposition 10.4.2 in \citet{ambrosio2005}, it is clear that these steps are well-defined. Moreover, the map $\mathcal{T}_h(x) = \text{prox}_V^h(x)$ is such that $\rho_h^{k+1} =  (\mathcal{T}_h)_\#\rho_h^{k}$. Since the proximal operator satisfies 
$y = \text{prox}_V^h(x) \iff (x-y)/h \in \partial V(x)$ \citep[see e.g.][]{parikh2014proximal}, this can be seen as an implicit Euler step for the evolution of $T_t$ given in \eqref{eq:transport_map}.

\subsection{The diffusion equation}
The classical diffusion equation \eqref{eq:fp_heat}, also known as the heat equation, was first described as the gradient flow of the entropy functional $\rho \mapsto \mathcal{H}(\rho)$ on the set of densities in $\mathcal{P}_2(\mathbb{R}^d)$ by \citet{jordan1998variational}. Note that $\mathcal{H}(\rho)$ is the negative Gibbs--Boltzmann entropy of $\rho$. As pointed out in the aforementioned paper, the interpretation of the diffusion equation as the gradient flow of $\mathcal{H}$ therefore provides a natural interpretation of diffusion as the tendency of a system to maximize entropy. 

Unlike the other gradient flows we have discussed, the flow of $\rho \mapsto \mathcal{H}(\rho)$ is known in closed form: it is well-known that the solution of the diffusion equation \eqref{eq:fp_heat} is given by the density $\rho(t) = \phi_t * \rho_0$, where $\phi_t$ is the Gaussian kernel defined in \eqref{eq:gaussian_kernel}.

\section{Proximal Langevin Monte Carlo}
We are now ready to describe connections between JKO discretized gradient flows, operator splitting, and Langevin-based Monte Carlo algorithms. For a time-step $h>0$ and for $k\geq 0$, consider the iterative scheme
\begin{equation}\label{eq:numerical_steps}
\rho_h^{k+1/2} = (\mathcal{T}_h)_\# \rho_h^{k}, \quad \quad \rho_h^{k+1} = \phi_h * \rho_h^{k+1/2},
\end{equation}
which can be seen as alternating between performing a JKO step for the gradient flow of $\rho \mapsto \mathcal{V}(\rho)$ and solving the exact gradient flow of $\rho \mapsto \mathcal{H}(\rho)$. Taking instead the particle perspective,  let $X_h^{0} \sim \rho_0$ and perform
\begin{equation}\label{eq:prox_lmc}
X_h^{k+1/2} = \mathcal{T}_h(X_h^{k}) = \prox_V^h(X_h^{k}), \quad \quad X_h^{k+1} = X_h^{k+1/2} + \sqrt{2h}\eta^{k+1},
\end{equation} 
where $(\eta^{k})_{k\geq 1}$ is a sequence of independent $\mathcal{N}(0,\mathcal{I}_d)$ random variables. For each $k$, the laws of $X_h^{k+1/2}$ and $X_h^{k+1}$ are equal to $\rho_h^{k+1/2}$ and  $\rho_h^{k+1}$ respectively. A generalization of this algorithm was proposed by \citet{pereyra2016proximal} and studied further in \citet{durmus2016efficient}.

Note that $\prox_V^h(x) = x - h \nabla M_V^h(x)$, where 
\begin{equation}
M_V^h(x) = \inf_{y \in \mathbb{R}^d}\left\{V(y) + \frac{1}{2h}\|x - y\|^2\right\}
\end{equation}
is the Moreau--Yosida regularization of $V$. Moreover, in the case where $V$ is twice differentiable with positive definite Hessian $D^2V(x)$ for every $x\in\mathbb{R}^d$, it is known that $\prox_V^h(x) = x - h\nabla V(x) + o(h)$ as $h\to 0$ \citep[see e.g.][Section 3.3]{parikh2014proximal}. Hence, for small $h$, the steps in \eqref{eq:prox_lmc} can be thought of as approximating the Unadjusted Langevin Algorithm.

\subsection{Convergence analysis}
We follow the approach of \citet{clement2011trotter}, which itself is an adaptation of the methods in \citet[Chapter 4]{ambrosio2005}, to establish that the scheme in \eqref{eq:numerical_steps} satisfies a Lie--Trotter--Kato formula. We will also derive an upper bound on the 2-Wasserstein distance between the interpolation $\rho^h(t) = \rho_h^{k+1}$ for $t \in (kh, (k+1)h]$ and the gradient flow $\rho(t)$ of $\rho \mapsto \mathcal{H}(\rho | \pi)$. In turn, this allows us to bound the quantity of interest, $\was_2(\rho^h(t),\pi)$. Before stating the main results, we introduce some notation.

For any $n \geq 1$ and any $0\leq k \leq n-1$, define the quantities
\begin{equation}
\delta_{h}^{k+1} = \mathcal{V}(\rho_{h}^{k+1}) -\mathcal{V}(\rho_{h}^{k+1/2}), \qquad \Delta_{h}^{k+1} = \sum_{j=1}^{k+1} \delta_{h}^{j}.
\end{equation}
Note that $\delta_{h}^{k+1}$ can also be expressed
\begin{equation}
\delta_{h}^{k+1} = \mathbb{E}V(X+\eta) - \mathbb{E}V(X),
\end{equation}
where $X\sim \rho_{h}^{k+1/2}$ and $\eta \sim \mathcal{N}(0,2h\mathcal{I}_d)$ independently. By convexity of $V$ and Jensen's inequality, it is clear that $\delta_{h}^{k+1} \geq \mathbb{E}V(\mathbb{E}(X+\eta | X)) - \mathbb{E}V(X) \geq 0$. The next results show that controlling these quantities is sufficient to establish convergence. We also remark that if one has access to independent runs of the algorithm given in \eqref{eq:prox_lmc}, one can estimate $\delta_{h}^{k+1}$ by averaging $V(X_h^{k+1}) - V(X_h^{k})$ across those runs.

\begin{theorem}\label{theorem:converge_to_gradient_flow}
Let $(\rho^{h_m}(t))_{m\geq1}$ be a sequence of discrete solutions generated from $\rho_0$, such that $h_m\Delta_{h_m}^m \to 0$ and $h_m m \to T$ for some $T >0$, as $m \to \infty$. Then, $\rho^{h_m}(t)$ converges uniformly on $[0,T]$ to $\rho(t)$, the gradient flow of $\rho\mapsto\mathcal{H}(\rho | \pi)$ started from $\rho_0$. Moreover, if $h >0$ and $n\geq 1$ are such that $hn \leq T$, then for any $t \in [0,hn]$, 
\begin{equation}\label{eq:uniform_approx_of_gradient_flow}
\was_2 (\rho^{h}(t), \rho(t)) \leq \sqrt{6h \left(\mathcal{H}(\rho_{0} | \pi) + \Delta_{h}^{n}\right)}.
\end{equation}
\end{theorem}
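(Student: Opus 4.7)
The plan is to turn each half-step of \eqref{eq:numerical_steps} into a ``one-step EVI,'' combine them into a discrete evolution variational inequality for $\mathcal{H}(\cdot|\pi)$, and then play it off against the continuous EVI satisfied by the true gradient flow $\rho(t)$. For the first half-step, convexity of $V$ makes $\mathcal{V}$ $0$-displacement convex, so the standard variational characterization of JKO minimizers \citep[Chapter~4]{ambrosio2005} yields, for every $\nu\in\mathcal{P}_2(\mathbb{R}^d)$,
$$\tfrac12\was_2^2(\rho_h^{k+1/2},\nu)-\tfrac12\was_2^2(\rho_h^k,\nu)+h\mathcal{V}(\rho_h^{k+1/2})\leq h\mathcal{V}(\nu).$$
The second half-step is the exact gradient flow of $\mathcal{H}$ run for time $h$, so integrating the continuous EVI for $\mathcal{H}$ over $[0,h]$ and using that $\mathcal{H}$ is nonincreasing along the heat semigroup gives the analogous inequality with $(\mathcal{V},\rho_h^k,\rho_h^{k+1/2})$ replaced by $(\mathcal{H},\rho_h^{k+1/2},\rho_h^{k+1})$. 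Summing the two and using $\mathcal{V}(\rho_h^{k+1/2})=\mathcal{V}(\rho_h^{k+1})-\delta_h^{k+1}$ together with \eqref{eq:relative_entropy} produces the discrete EVI
$$\tfrac12\was_2^2(\rho_h^{k+1},\nu)-\tfrac12\was_2^2(\rho_h^k,\nu)+h\mathcal{H}(\rho_h^{k+1}|\pi)\leq h\mathcal{H}(\nu|\pi)+h\delta_h^{k+1}.$$

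Set $A(k,t)=\tfrac12\was_2^2(\rho_h^k,\rho(t))$ and $B_k=A(k,kh)$. Applying the discrete EVI at $\nu=\rho((k+1)h)$ and adding to it the continuous EVI for $\rho$ with $\mu=\rho_h^{k+1}$ integrated over $[kh,(k+1)h]$, the $h\mathcal{H}(\rho_h^{k+1}|\pi)$ terms cancel; bounding the integrated continuous entropy from below by $h\mathcal{H}(\rho((k+1)h)|\pi)$ via monotonicity along the exact flow then yields
$$2A(k+1,(k+1)h)\leq A(k,(k+1)h)+A(k+1,kh)+h\delta_h^{k+1}.$$
The two ``cross'' terms on the right are in turn bounded back to $B_k$ by one further application of each EVI (continuous with $\mu=\rho_h^k$ on $[kh,(k+1)h]$, and discrete with $\nu=\rho(kh)$). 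Substituting these bounds exhibits $B_{k+1}-B_k$ as a pure telescope in the entropies $\mathcal{H}(\rho_h^j|\pi)$ and $\mathcal{H}(\rho(jh)|\pi)$, plus $h\delta_h^{k+1}$; summing from $k=0$ to $n-1$, using $B_0=0$, and discarding the nonnegative terminal entropies collapses the telescope to $B_n\leq h(\mathcal{H}(\rho_0|\pi)+\Delta_h^n)$.

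To extend the bound from grid points to an arbitrary $t\in(kh,(k+1)h]$, I would combine the triangle inequality with the $\tfrac12$-Hölder estimate $\was_2(\rho(s),\rho(t))\leq\sqrt{|t-s|\,\mathcal{H}(\rho_0|\pi)}$, itself a consequence of the dissipation bound $\int_0^\infty|\dot\rho|^2\,du\leq\mathcal{H}(\rho_0|\pi)$ for EVI gradient flows. Squaring and applying $(a+b)^2\leq 2(a^2+b^2)$ inflates the constant from $2$ to the advertised $6$, producing \eqref{eq:uniform_approx_of_gradient_flow}. Uniform convergence of $\rho^{h_m}(t)$ to $\rho(t)$ on $[0,T]$ is then immediate from $h_m m\to T$ and $h_m\Delta_{h_m}^m\to 0$, since the right-hand side of the bound is $\sqrt{6h_m(\mathcal{H}(\rho_0|\pi)+\Delta_{h_m}^m)}$.

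The main obstacle is algebraic: designing the combination that yields the clean recursion for $B_k$. Both EVIs must be applied twice—once each at the ``matched'' grid times $kh$ and $(k+1)h$—so that the $h\mathcal{H}(\rho_h^{k+1}|\pi)$ terms cancel between the two matched inequalities and leave behind only the error $h\delta_h^{k+1}$, while the monotonicity of $\mathcal{H}(\cdot|\pi)$ along the exact flow is what rescues the otherwise-mismatched integral $\int_{kh}^{(k+1)h}\mathcal{H}(\rho(u)|\pi)\,du$. Once this algebraic core is in place, the remaining steps—passing from the distributional form \eqref{eq:sense_of_distributions} of the continuous EVI to its integrated version, and taking $h_m\to 0$—are routine.
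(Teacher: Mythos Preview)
Your argument is correct, but it follows a genuinely different route from the paper's. The paper never plays the discrete EVI against the continuous EVI for $\rho(t)$; instead it compares two \emph{discrete} trajectories $\rho^{h}(t)$ and $\gamma^{r}(t)$ via a bilinear interpolation $\was^2_{h,r}(t,s)$ built from piecewise-affine interpolants, shows the resulting family is Cauchy as $h,r\to 0$ (Lemmas~A.4--A.5), proves the limit itself satisfies the EVI \eqref{eq:grad_flow} by passing to the limit in a residual bound (Lemma~A.3 plus Fatou and lower semicontinuity), and only then sends $r\to 0$ in the two-step comparison to extract the quantitative bound \eqref{eq:uniform_approx_of_gradient_flow}. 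In other words, the paper \emph{constructs} $\rho(t)$ in the course of the proof, following the Cl\'ement--Desch/AGS template.

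Your approach instead takes the existence and EVI of $\rho(t)$ as input---legitimate here, since the paper cites \citet{ambrosio2009existence} for exactly that---and runs a direct discrete-vs-continuous Gronwall on $B_k=\tfrac12\was_2^2(\rho_h^k,\rho(kh))$. The algebra you sketch does close: each of the four EVI applications contributes the right sign, the $h\mathcal{H}(\rho_h^{k+1}|\pi)$ terms cancel, and the two entropy telescopes (one discrete, one continuous via the lower bound $\int_{kh}^{(k+1)h}\mathcal{H}(\rho(u)|\pi)\,du\geq h\mathcal{H}(\rho((k+1)h)|\pi)$) each contribute at most $h\mathcal{H}(\rho_0|\pi)$, giving $\was_2^2(\rho_h^{k},\rho(kh))\leq 2h(\mathcal{H}(\rho_0|\pi)+\Delta_h^n)$. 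The off-grid extension via the energy-dissipation H\"older bound then lifts $2$ to $6$ as you claim. The payoff of your route is brevity and transparency---no interpolation machinery, no Cauchy-sequence detour. The payoff of the paper's route is self-containment: it does not presuppose existence of the EVI flow, and the same argument would go through for functionals where that existence is not already available off the shelf.
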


The corollary below follows from combining \eqref{eq:contraction} and \eqref{eq:uniform_approx_of_gradient_flow} via the triangle inequality.

\begin{corollary}\label{cor:convergence}
Suppose $V$ is $\lambda$-strongly convex. Then, under the assumptions of Theorem \ref{theorem:converge_to_gradient_flow}, we have
\begin{equation}
\was_2 (\rho^{h}(t),\pi) \leq \sqrt{6h \left(\mathcal{H}(\rho_{0} | \pi) + \Delta_{h}^{n}\right)} +  \was_2(\rho_0,\pi)e^{-\lambda t},
\end{equation}
for any $t \in [0,hn]$, where $h> 0$ and $n\geq 1$.
\end{corollary}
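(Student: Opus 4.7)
The plan is essentially to combine the two ingredients the corollary advertises: the non-asymptotic bound of Theorem \ref{theorem:converge_to_gradient_flow} controlling the discrete scheme against the continuous gradient flow, and the exponential contraction of the continuous gradient flow towards $\pi$ recalled in \eqref{eq:contraction}. The bridge between the two is a single application of the triangle inequality for $\was_2$, with the time-$t$ gradient flow $\rho(t)$ started from $\rho_0$ acting as the pivot.

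First I would justify that \eqref{eq:contraction} is applicable under the hypothesis of the corollary. Because $V$ is $\lambda$-strongly convex with $\lambda > 0$, Proposition 9.3.2 of \citet{ambrosio2005} gives that $\mathcal{V}$ is $\lambda$-displacement convex; the entropy $\mathcal{H}$ is displacement convex (a consequence of \citet{mccann1997convexity}); and the constant $\log Z$ contributes nothing to displacement convexity. Hence $\mathcal{H}(\cdot \mid \pi)$ is $\lambda$-displacement convex with $\lambda>0$, its unique minimizer is $\pi$, and the contraction estimate
\begin{equation*}
\was_2(\rho(t),\pi) \leq \was_2(\rho_0,\pi)e^{-\lambda t}
\end{equation*}
from \eqref{eq:contraction} is in force for every $t \geq 0$, where $\rho(t)$ is the gradient flow of $\mathcal{H}(\cdot \mid \pi)$ starting at $\rho_0$.

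Next I would invoke Theorem \ref{theorem:converge_to_gradient_flow}: under the stated assumptions on $h$ and $n$, for all $t \in [0,hn]$,
\begin{equation*}
\was_2(\rho^{h}(t),\rho(t)) \leq \sqrt{6h\left(\mathcal{H}(\rho_0 \mid \pi) + \Delta_h^n\right)}.
\end{equation*}
Adding the two bounds via the triangle inequality
\begin{equation*}
\was_2(\rho^h(t),\pi) \leq \was_2(\rho^h(t),\rho(t)) + \was_2(\rho(t),\pi)
\end{equation*}
produces exactly the claimed inequality.

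There is really no obstacle here: both pieces are already available, one from \eqref{eq:contraction} and one from \eqref{eq:uniform_approx_of_gradient_flow}, and the only subtlety is confirming that the hypotheses for \eqref{eq:contraction} are met, which reduces to the displacement convexity chain above. The entire proof is a two-line triangle inequality argument once the applicability of the contraction estimate is noted.
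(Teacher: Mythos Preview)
Your proposal is correct and matches the paper's own argument exactly: the paper states that the corollary ``follows from combining \eqref{eq:contraction} and \eqref{eq:uniform_approx_of_gradient_flow} via the triangle inequality,'' which is precisely the two-line argument you spell out. Your additional verification that $\lambda$-strong convexity of $V$ yields $\lambda$-displacement convexity of $\mathcal{H}(\cdot\mid\pi)$ simply recapitulates the discussion already given in Section~\ref{sec:wasserstein_gradient_flow} leading up to \eqref{eq:contraction}.
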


\subsection{Explicit rates}\label{seq:explicit_rates}
It is clear that the rate at which $h\Delta_{h}^{n} \to 0$ as $h\to 0$ is crucial in determining the quality of the approximation $\rho^{h}(t)$. Under some assumptions on $\rho_0$ and $V$, we can obtain explicit bounds on $\Delta_{h}^{n}$ in terms of $h, n$, and $d$, as will be seen below.

Suppose $V = f + g$, where $f$ is $\lambda$-strongly convex and has Lipschitz continuous gradient, and $g$ is convex and Lipschitz. That is, assume that there exist $M(d)$ and $L(d)$ such that  for all $x,y \in \mathbb{R}^d$,
\begin{align} 
\|\nabla f(x) - \nabla f(y)\| &\leq M(d)\|x-y\| \label{eq:grad_f_lip}\\
|g(x) - g(y)| &\leq L(d)\|x-y\|, \label{eq:g_lip}
\end{align}
where the notation $M(d)$ and $L(d)$ reflects potential dependence of the Lipschitz constants on dimension. Under this assumption, we can bound  $\delta_{h}^{k+1}$ as follows:
\begin{align}
\mathbb{E}V(X+\eta) - \mathbb{E}V(X) &= \mathbb{E}[f(X+\eta) - f(X)] + \mathbb{E}[g(X+\eta) - g(X)]\\
&\leq \mathbb{E}\left[\nabla f(X)^\top \eta + \frac{M(d)}{2}\|\eta\|^2\right] +L(d) \mathbb{E}\|\eta\| \label{eq:from_nesterov}\\
&\leq M(d)hd + L(d)\sqrt{2hd},
\end{align}
where \eqref{eq:from_nesterov} follows from the basic property that
\begin{equation}
f(y) \leq f(x) +\nabla f(x)^\top(y-x) + \frac{M(d)}{2}\|x-y\|^2,
\end{equation}
for all $x,y \in \mathbb{R}^d$, see for example \citet{nesterov2013introductory}. Then, $h\Delta_{h}^n \leq M(d)hd \cdot hn + L(d)\sqrt{2hd}\cdot hn$. Hence, for any $T>0$ we could take $h_m = T/m$ and satisfy the conditions of Corollary \ref{cor:convergence}. 

Next, we can use these bounds to derive explicit rates for $n$ and $h$ that yield a desired approximation error. When selecting the initial distribution, it is not unreasonable to assume that one can choose $\rho_0$ such that $\was_2(\rho_0,\pi) = \mathcal{O}(\sqrt{d})$ and $\mathcal{H}(\rho_0 | \pi) =  \mathcal{O}(d)$. See Appendix \ref{appendix:rates} for justifications and an explicit example where these assumptions hold. 

Now, if we want $\was_2 (\rho^{h}(hn),\pi) =  \mathcal{O}(\varepsilon)$ for a threshold $\varepsilon >0$, we could require that both $h\mathcal{H}(\rho_{0} | \pi) + h\Delta_{h}^{n} =  \mathcal{O}( \varepsilon^2)$ and  $\was_2(\rho_0,\pi)e^{-\lambda hn} = \mathcal{O}(\varepsilon)$. Under the assumptions above, to ensure $\was_2(\rho_0,\pi)e^{-\lambda hn} = \mathcal{O}(\varepsilon)$, it is sufficient to take $hn = \Omega(\log(d/\varepsilon^2))$. To get $h\mathcal{H}(\rho_{0} | \pi) = \mathcal{O}( \varepsilon^2)$, one can require that $h = \mathcal{O}(\varepsilon^2/d)$. Lastly, to get $h\Delta_{h}^{n} =\mathcal{O}(\varepsilon^2)$, one can in turn require that both $M(d)hd \log(\sqrt{d}/\varepsilon) = \mathcal{O}(\varepsilon^2)$ and  $L(d)\sqrt{2hd}\log(\sqrt{d}/\varepsilon) = \mathcal{O}( \varepsilon^2)$. The former can be achieved if 
\begin{equation}\label{eq:M_order_of_n}
n = \Omega\left( \frac{d M(d) \log(\sqrt{d}/\varepsilon)^2}{\varepsilon^2}\right) \quad \text{and} \quad h = \mathcal{O}\left( \frac{\varepsilon^2}{d M(d) \log(\sqrt{d}/\varepsilon)}\right),
\end{equation}
while maintaining  $hn = \Omega(\log(d/\varepsilon^2))$. Similarly, the latter can be achieved if
\begin{equation}\label{eq:L_order_of_n}
n = \Omega\left( \frac{d L(d)^2 \log(\sqrt{d}/\varepsilon)^3}{\varepsilon^4}\right) \quad\text{and} \quad h = \mathcal{O}\left( \frac{\varepsilon^4}{d L(d)^2 \log(\sqrt{d}/\varepsilon)^2}\right),
\end{equation}
still keeping $hn = \Omega(\log(d/\varepsilon^2))$.

In the case where $g=0$ (or equivalently $L(d) = 0$) and $M(d) = \mathcal{O}(1)$, we recover the assumptions on $V$ that were made in e.g. \citet{dalalyan2017further, dalalyan2017user}. Using \eqref{eq:M_order_of_n}, we see that $n  = \Omega(d \varepsilon^{-2} \log(d \varepsilon^{-2})^2)$ iterations with a step-size of $h = \log(d/\varepsilon^2)/n$ are sufficient to achieve a 2-Wasserstein error of $\mathcal{O}(\varepsilon)$. Up to log-terms, this is the same rate as those derived for ULA in the aforementioned papers. 

In the case where $g(x) \propto \|x\|_1$ so that $L(d) = \mathcal{O}(\sqrt{d})$, we get that $n = \Omega(d^2/\varepsilon^4)$ iterations are sufficient (ignoring the log-terms). This improves upon the recent results of \citet{grappin2018thesis}, who showed that if additionally $f$ is quadratic, then $n = \Omega(d^3/\varepsilon^4)$ iterations are sufficient to yield a $2$-Wasserstein error of $\mathcal{O}(\varepsilon)$. Comparing to the remark accompanying Theorem 3 of \citet{durmus2016efficient}, our results appears less sharp than the TV bounds they derive, in which $n$ depends linearly on $d$ (up to log-terms) whenever $V$ is strongly convex. As can be seen in Appendix \ref{appendix:proofs}, this likely stems from not optimally accounting for $\lambda$-displacement convexity in Lemma \ref{lemma:bound_function_with_integral}.

\section{Conclusion}
In this paper, we have developed novel connections between the fields of Wasserstein gradient flow, operator splitting, and Langevin Monte Carlo. We have demonstrated that the gradient flow perspective allows us to derive new convergence results about a proximal version of the Unadjusted Langevin Algorithm. Under certain assumptions on the potential $V$, we derive results that are on par with the contemporary literature on ULA. However, we point out that there is room for improvement in our current proofs.  In particular, they could be improved by better accounting for the condition that $V$ is $\lambda$-strongly convex, allowing us to obtain sharper bounds when that assumption is present. On the other hand, the proof of Theorem \ref{theorem:converge_to_gradient_flow} generalizes to any convex $V$. Hence, to obtain control over the proximal ULA algorithm in such a case, one would only need to formulate conditions under which one can still derive a rate of convergence of the exact gradient flow to $\pi$, though one should no longer expect this convergence to be exponentially fast. Some recent progress in this direction  based on Lojasiewicz inequalities was made by \citet{blanchet2016family}.  

We also hope that these connections can have implications on methodology. The many other splitting schemes discussed by \citet{holden2010splitting} and in the optimization literature can potentially lead to new sampling algorithms. The same holds for other numerical schemes, such as the alternative JKO algorithm developed by \citet{legendre2017second}. For the Fokker--Planck equation, they show that their new scheme is second-order convergent, improving the original JKO scheme's first-order convergence. Recently, \citet{plazotta2018bdf2} developed a variational formulation of the BDF2 scheme applicable to the estimation of gradient flows. It is also likely that the growing literature on Langevin Monte Carlo and its variations can lead to new time discretization schemes that are of both practical and theoretical interest to the gradient flow community.

% Acknowledgments---Will not appear in anonymized version
{\bf Acknowledgements}: I am greatly indebted to Nicolas Chopin and Marco Cuturi for hosting my visit to ENSAE ParisTech and CREST, where the material in this paper was developed. I'd also like to thank L\'ena\"ic Chizat, Arnak Dalalyan, Jeremy Heng, Pierre E. Jacob, Boris Muzellec and Gabriel Peyr\'e for interesting conversations about optimal transport, gradient flows, and Monte Carlo sampling. This material is based upon research supported by the Chateaubriand Fellowship of the Office for Science \& Technology of the Embassy of France in the United States.

\bibliography{biblio}
\bibliographystyle{apalike}

\appendix

\section{Proofs}\label{appendix:proofs}
Closely following \citet{clement2011trotter} and \citet{ambrosio2005}, we start by proving a discrete version of the evolution variational inequality used to characterize gradient flows. Using interpolations of the discrete solutions, we use the discrete EVI to build a continuous approximation to the desired EVI. With this approximation, we derive a bound that quantifies the closeness of two discrete solutions. This bound is used to show that under appropriate assumptions on a sequence of discrete solutions, this sequence is Cauchy and therefore has a limit. Lastly, this limit is shown to be the desired gradient flow. 

\begin{lemma}[Discrete Evolution Variation Inequality]\label{lemma:Discrete Evolution Variation Inequality}
For any $n\geq 1$, $h >0$, $\nu  \ll \mu_{\text{Leb}}$ and $k = 0, \dots, n-1$ we have
\begin{align}
\begin{split}
\frac{1}{2h}&\left[ \was_2^2(\rho_{h}^{k+1},\nu) - \was_2^2(\rho_{h}^{k},\nu) \right] +\frac{\lambda}{2}\was_2^2(\rho_h^{k+1/2},\nu) \\
&\leq \mathcal{H}(\nu | \pi) - \mathcal{H}(\rho_{h}^{k+1} | \pi) - \frac{1}{2h}\was_2^2(\rho_{h}^{k+1/2},\rho_{h}^{k}) + \delta_{h}^{k+1}.
\end{split}
\end{align}
\end{lemma}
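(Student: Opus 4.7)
The scheme \eqref{eq:numerical_steps} decomposes into a proximal/JKO step $\rho_h^k \mapsto \rho_h^{k+1/2}$ for $\mathcal{V}$ followed by one interval $[0,h]$ of the exact heat semigroup $\rho_h^{k+1/2} \mapsto \rho_h^{k+1} = \phi_h \ast \rho_h^{k+1/2}$. My plan is to derive a discrete EVI for each half-step in isolation, add them, and then rearrange the right-hand side using the decomposition $\mathcal{H}(\cdot\,|\,\pi) = \mathcal{H}(\cdot) + \mathcal{V}(\cdot) + \log Z$ to produce the defect $\delta_h^{k+1} = \mathcal{V}(\rho_h^{k+1}) - \mathcal{V}(\rho_h^{k+1/2})$.

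For the first half-step I would invoke the standard discrete EVI for JKO steps of $\lambda$-convex functionals along generalized geodesics based at $\rho_h^k$ (Chapter 4 of \citet{ambrosio2005}). Testing minimality of $\rho_h^{k+1/2}$ against a generalized geodesic from $\rho_h^{k+1/2}$ to $\nu$ with base $\rho_h^k$, and combining the $\lambda$-convexity of $\mathcal{V}$ (which follows from $\lambda$-convexity of $V$) with the convexity of $\was_2^2(\cdot,\rho_h^k)$ along such geodesics, one gets
\begin{equation*}
\tfrac{1}{2h}\bigl[\was_2^2(\rho_h^{k+1/2},\nu) - \was_2^2(\rho_h^k,\nu)\bigr] + \tfrac{1}{2h}\was_2^2(\rho_h^{k+1/2},\rho_h^k) + \tfrac{\lambda}{2}\was_2^2(\rho_h^{k+1/2},\nu) \leq \mathcal{V}(\nu) - \mathcal{V}(\rho_h^{k+1/2}).
\end{equation*}

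For the second half-step, $(\phi_s\ast\rho_h^{k+1/2})_{s\in[0,h]}$ is the gradient flow of the $0$-displacement convex entropy $\mathcal{H}$ and so satisfies the EVI \eqref{eq:grad_flow} with $\lambda = 0$ against any $\nu \ll \mu_{\text{Leb}}$ with $\mathcal{H}(\nu)<+\infty$. Integrating this EVI over $s\in[0,h]$ in the sense of distributions, and using the classical fact that $s\mapsto\mathcal{H}(\phi_s\ast\rho_h^{k+1/2})$ is non-increasing (so that $\int_0^h \mathcal{H}(\rho(s))\,ds \geq h\,\mathcal{H}(\rho_h^{k+1})$), yields
\begin{equation*}
\tfrac{1}{2h}\bigl[\was_2^2(\rho_h^{k+1},\nu) - \was_2^2(\rho_h^{k+1/2},\nu)\bigr] \leq \mathcal{H}(\nu) - \mathcal{H}(\rho_h^{k+1}).
\end{equation*}

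Adding the two displayed inequalities cancels the intermediate $\was_2^2(\rho_h^{k+1/2},\nu)$ terms and produces $\was_2^2(\rho_h^{k+1},\nu) - \was_2^2(\rho_h^k,\nu)$ on the left, along with the desired $\tfrac{1}{2h}\was_2^2(\rho_h^{k+1/2},\rho_h^k)$ and $\tfrac{\lambda}{2}\was_2^2(\rho_h^{k+1/2},\nu)$ terms. On the right, the sum $\mathcal{V}(\nu)+\mathcal{H}(\nu) - \mathcal{V}(\rho_h^{k+1/2}) - \mathcal{H}(\rho_h^{k+1})$ may be rewritten as $\mathcal{H}(\nu\,|\,\pi) - \mathcal{H}(\rho_h^{k+1}\,|\,\pi) + \bigl[\mathcal{V}(\rho_h^{k+1}) - \mathcal{V}(\rho_h^{k+1/2})\bigr] = \mathcal{H}(\nu\,|\,\pi) - \mathcal{H}(\rho_h^{k+1}\,|\,\pi) + \delta_h^{k+1}$, which is exactly the claimed inequality. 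The main delicacy lies in the first half-step: the proximal-step EVI requires convexity of $\mathcal{V}$ along \emph{generalized} (not merely displacement) geodesics, which is what permits the squared distance term $\was_2^2(\cdot,\rho_h^k)$ to be handled simultaneously with $\mathcal{V}$; the assumption $\nu \ll \mu_{\text{Leb}}$ ensures that such a generalized geodesic from $\rho_h^{k+1/2}$ to $\nu$ with base $\rho_h^k$ is well-defined via the optimal transport maps guaranteed by Brenier's theorem.
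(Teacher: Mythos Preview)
Your proposal is correct and follows essentially the same route as the paper's proof: a discrete EVI for the JKO step of $\mathcal{V}$ (the paper cites Corollary~4.1.3 and Lemma~9.2.7 of \citet{ambrosio2005}, which is exactly the generalized-geodesic argument you describe), an integrated EVI for the heat semigroup using monotonicity of $s\mapsto\mathcal{H}(\phi_s\ast\rho_h^{k+1/2})$, then summing and rewriting via $\mathcal{H}(\cdot\,|\,\pi)=\mathcal{H}(\cdot)+\mathcal{V}(\cdot)+\log Z$ to produce $\delta_h^{k+1}$. One small attribution slip: the hypothesis $\nu\ll\mu_{\text{Leb}}$ is really there to ensure $\mathcal{H}(\nu)<+\infty$ in the second half-step; for the generalized geodesic in the first half-step what matters is absolute continuity of the \emph{base point} $\rho_h^k$ (so that Brenier maps from $\rho_h^k$ to both $\rho_h^{k+1/2}$ and $\nu$ exist), which the paper notes explicitly.
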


\begin{proof}
By Corollary 4.1.3 of \citet{ambrosio2005} (see also their Lemma 9.2.7), for any $\rho_{h}^k \ll \mu_{\text{Leb}}$, we have 
\begin{align}\label{eq:discrete_var_ineq_transport}
\begin{split}
\frac{1}{2h}&\left[ \was_2^2(\rho_{h}^{k+1/2},\nu) - \was_2^2(\rho_{h}^{k},\nu) \right] +\frac{\lambda}{2}\was_2^2(\rho_h^{k+1/2},\nu)\\
&\leq \mathcal{V}(\nu) - \mathcal{V}(\rho_{h}^{k+1/2}) - \frac{1}{2h} \was_2^2(\rho_{h}^{k+1/2},\rho_{h}^{k}).
\end{split}
\end{align}

Recall that $t \mapsto \phi_t * \rho_{h}^{k+1/2}$ is the gradient flow of the 0-displacement convex entropy functional $\rho\mapsto\mathcal{H}(\rho)$. Therefore,
\begin{equation}
\frac{d}{dt} \frac{1}{2} \was_2^2(\phi_t * \rho_{h}^{k+1/2},\nu) + \mathcal{H}(\phi_t * \rho_{h}^{k+1/2}) \leq \mathcal{H}(\nu),
\end{equation}
in the sense of distributions. By Remark 1.2 of \citet{clement2011trotter}, an equivalent condition is: for all $0 < a < b< \infty$,
\begin{align}
\begin{split}
\frac{1}{2}&\left[ \was_2^2(\phi_b * \rho_{h}^{k+1/2},\nu) - \was_2^2(\phi_a* \rho_{h}^{k+1/2},\nu) \right] \\
&\leq (b-a)\mathcal{H}(\nu) - \int_a^b \mathcal{H}(\phi_t * \rho_{h}^{k+1/2}) dt.
\end{split}
\end{align}
Noting that $t \mapsto \mathcal{H}(\phi_t * \rho_{h}^{k+1/2})$ is non-increasing by Theorem 11.2.1 of \citet{ambrosio2005} (see equation 11.2.4), we have that for all $0 < a < b< \infty$,
\begin{align}
\begin{split}
\frac{1}{2}&\left[ \was_2^2(\phi_b * \rho_{h}^{k+1/2},\nu) - \was_2^2(\phi_a* \rho_{h}^{k+1/2},\nu) \right] \\
&\leq (b-a)\mathcal{H}(\nu) -  (b-a)\mathcal{H}(\phi_b * \rho_{h}^{k+1/2}).
\end{split}
\end{align}
Letting $a \to 0$, $b = h$, we have
\begin{equation}\label{eq:discrete_var_ineq_diffusion}
\frac{1}{2h}\left[ \was_2^2(\rho_{h}^{k+1},\nu) - \was_2^2(\rho_{h}^{k+1/2},\nu) \right] \leq \mathcal{H}(\nu) - \mathcal{H}(\rho_{h}^{k+1}).
\end{equation}
Adding inequalities \eqref{eq:discrete_var_ineq_transport} and \eqref{eq:discrete_var_ineq_diffusion}, as well as adding and subtracting $\mathcal{V}(\rho_{h}^{k+1})$ to the right hand side to make $\delta_{h}^{k+1}$ appear, yields the result.
\end{proof}

It can be deduced from Lemma \ref{lemma:Discrete Evolution Variation Inequality} that
\begin{equation}\label{eq:bound_wasserstein_consequtive}
\frac{1}{2h}\was_2^2(\rho_{h}^{k+1},\rho_{h}^{k}) \leq  \mathcal{H}(\rho_{h}^{k}| \pi) - \mathcal{H}(\rho_{h}^{k+1} | \pi) - \frac{1+\lambda h}{2h}\was_2^2(\rho_{h}^{k+1/2},\rho_{h}^{k}) + \delta_{h}^{k+1},
\end{equation}
by taking $\nu = \rho_{h}^{k}$, so that 
\begin{align}
\sum_{k=0}^{n-1}\was_2^2(\rho_{h}^{k+1},\rho_{h}^{k}) &\leq  2h\left[\mathcal{H}(\rho_{h}^{0} | \pi) - \mathcal{H}(\rho_{h}^{n} | \pi) + \Delta_{h}^{n}\right],\\
& \leq 2h\left[\mathcal{H}(\rho_{h}^{0} | \pi) + \Delta_{h}^{n}\right].
\end{align}
Similarly,
\begin{equation}
\was_2^2(\rho_{h}^{k+1/2},\rho_{h}^{k}) \leq \frac{2h}{1+\lambda h}\left[\mathcal{H}(\rho_{h}^{k}| \pi) - \mathcal{H}(\rho_{h}^{k+1} | \pi) + \delta_{h}^{k+1}\right],
\end{equation}
so that
\begin{equation}\label{eq:bound_wasserstein_consequtive_half}
\sum_{k=0}^{n-1}\was_2^2(\rho_{h}^{k+1/2},\rho_{h}^{k}) \leq \frac{2h}{1+\lambda h}\left[\mathcal{H}(\rho_{h}^{0} | \pi) + \Delta_{h}^{n}\right].
\end{equation}

\vspace{12pt}
Before proceeding, we introduce some more notation. Introduce the delayed interpolation $\rho_{h}(t) = \rho_{h}^{k}$ if $t \in [hk, (k+1)h)$,  and note that $\rho^{h}(t)$ and $\rho_{h}(t)$ are left and right continuous respectively. Introduce also an interpolation of the half-steps, denoted by $\rho^h_{1/2}(t) = \rho_{h}^{k+1/2}$ if $t \in [hk, (k+1)h)$.

Define the piecewise affine function 
\begin{equation} 
\ell_{h}(t) = \frac{t - hk}{h} \qquad \text{if $t \in [hk, (k+1)h)$,}
\end{equation}
and in turn let 
\begin{align}
\was^2_{h}(t, \nu) &= (1- \ell_{h}(t))\was_2^2(\rho_{h}(t), \nu) + \ell_{h}(t)\was_2^2(\rho^{h}(t), \nu), \\
\mathcal{H}_{h}(t) &= (1- \ell_{h}(t))\mathcal{H}(\rho_{h}(t) | \pi) + \ell_{h}(t)\mathcal{H}(\rho^{h}(t) | \pi). 
\end{align}
Let also
\begin{equation}
R_{h}(t) = 2(1- \ell_{h}(t))\left(\mathcal{H}(\rho_{h}^{k} | \pi) - \mathcal{H}(\rho_{h}^{k+1} | \pi) + \delta_{h}^{k+1}\right) + 2\ell_{h}(t) \delta_{h}^{k+1}
\end{equation}
for $t \in [hk, (k+1)h)$. By \eqref{eq:bound_wasserstein_consequtive} and $\delta_{h}^{k+1} \geq 0$, it is clear that $R_{h}(t)\geq 0$. The following result is an analog of Theorem 4.1.4 of \citet{ambrosio2005}.

\begin{lemma}[Gradient flow approximation]\label{lemma:Gradient flow approximation}
For any $n\geq 1$, $h >0$, $\nu \ll \mu_{\text{Leb}}$ and $t\in [0,hn]\setminus \{kh : k = 0,\dots,n\}$, we have
\begin{equation}
\frac{d}{dt}\frac{1}{2}\was_{h}^2(t,\nu) +\frac{\lambda}{2}\was_2^2(\rho^h_{1/2}(t),\nu)  + \mathcal{H}_{h}(t) - \mathcal{H}(\nu | \pi) \leq \frac{1}{2}R_{h}(t),
\end{equation}
where $d/dt$ denotes the pointwise derivative.
\end{lemma}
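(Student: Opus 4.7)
The plan is to reduce everything to Lemma \ref{lemma:Discrete Evolution Variation Inequality} pointwise on each open interval $(kh,(k+1)h)$, then check that the affine interpolations $\was_h^2(t,\nu)$ and $\mathcal{H}_h(t)$ and the error term $R_h(t)$ are defined so that what comes out of the discrete EVI matches the desired inequality term-by-term. The statement excludes the grid points $\{kh\}$ precisely so that we only need the derivative in the interior of each subinterval, where everything is affine in $t$.

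First I would fix $t\in (kh,(k+1)h)$ and compute the pointwise derivative directly from the definition: since $\ell_h$ is affine with slope $1/h$ on this interval,
\begin{equation*}
\frac{d}{dt}\tfrac{1}{2}\was_h^2(t,\nu)=\frac{1}{2h}\left[\was_2^2(\rho_h^{k+1},\nu)-\was_2^2(\rho_h^{k},\nu)\right].
\end{equation*}
Plugging in Lemma \ref{lemma:Discrete Evolution Variation Inequality} and using $\rho^h_{1/2}(t)=\rho_h^{k+1/2}$ on this interval, I would discard the non-positive contribution $-\tfrac{1}{2h}\was_2^2(\rho_h^{k+1/2},\rho_h^{k})$ to obtain
\begin{equation*}
\frac{d}{dt}\tfrac{1}{2}\was_h^2(t,\nu)+\tfrac{\lambda}{2}\was_2^2(\rho^h_{1/2}(t),\nu)\leq \mathcal{H}(\nu|\pi)-\mathcal{H}(\rho_h^{k+1}|\pi)+\delta_h^{k+1}.
\end{equation*}

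Next I would rewrite the right-hand side using $\mathcal{H}_h(t)$ and $R_h(t)$. Since $\mathcal{H}_h(t)-\mathcal{H}(\rho_h^{k+1}|\pi)=(1-\ell_h(t))\bigl[\mathcal{H}(\rho_h^{k}|\pi)-\mathcal{H}(\rho_h^{k+1}|\pi)\bigr]$ and
\begin{equation*}
\tfrac{1}{2}R_h(t)=(1-\ell_h(t))\bigl(\mathcal{H}(\rho_h^{k}|\pi)-\mathcal{H}(\rho_h^{k+1}|\pi)+\delta_h^{k+1}\bigr)+\ell_h(t)\delta_h^{k+1},
\end{equation*}
a direct algebraic comparison gives $\mathcal{H}_h(t)-\mathcal{H}(\rho_h^{k+1}|\pi)+\delta_h^{k+1}=\tfrac{1}{2}R_h(t)$ (exact equality). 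Substituting this identity into the previous display yields the claimed inequality.

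The argument is essentially a bookkeeping exercise once the discrete EVI is in hand, so I do not expect a serious obstacle. The only subtlety worth flagging is the restriction $t\notin\{kh\}$: at the grid points $\was_h^2(\cdot,\nu)$ is merely continuous (with a corner), so the pointwise derivative need not exist there, but this is harmless because the bound on the open intervals is exactly what is needed to pass later to an integral (distributional) form and to compare two discrete solutions, as in Theorem 4.1.4 of \citet{ambrosio2005}. I would also note that the non-negativity of $R_h(t)$, which was established right before the lemma via \eqref{eq:bound_wasserstein_consequtive} and $\delta_h^{k+1}\geq 0$, is not required in the derivation itself but will be used in the subsequent steps of the convergence argument.
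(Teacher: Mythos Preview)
Your proposal is correct and follows essentially the same route as the paper's proof: compute the derivative of the affine interpolation on $(kh,(k+1)h)$, apply Lemma~\ref{lemma:Discrete Evolution Variation Inequality}, drop the non-positive term $-\tfrac{1}{2h}\was_2^2(\rho_h^{k+1/2},\rho_h^k)$, and then verify the algebraic identity $\mathcal{H}_h(t)-\mathcal{H}(\rho_h^{k+1}\mid\pi)+\delta_h^{k+1}=\tfrac{1}{2}R_h(t)$. The paper's argument is organized slightly differently (it adds $\mathcal{H}_h(t)-\mathcal{H}(\nu\mid\pi)$ to both sides before expanding), but the content is identical.
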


\begin{proof}
If $t \in  (hk, (k+1)h)$, then
\begin{equation}
\frac{d}{dt}\frac{1}{2}\was^2_{h}(t, \nu) = \frac{1}{2h}\left[ \was_2^2(\rho_{h}^{k+1},\nu) - \was_2^2(\rho_{h}^{k},\nu) \right].
\end{equation}
By Lemma \ref{lemma:Discrete Evolution Variation Inequality}, this means
\begin{align}
\frac{d}{dt}&\frac{1}{2}\was^2_{h}(t, \nu) +\frac{\lambda}{2}\was_2^2(\rho^h_{1/2}(t),\nu)  + \mathcal{H}_{h}(t) - \mathcal{H}(\nu | \pi) \\
& = \frac{1}{2h}\left[ \was_2^2(\rho_{h}^{k+1},\nu) - \was_2^2(\rho_{h}^{k},\nu) \right] +\frac{\lambda}{2}\was_2^2(\rho^h_{1/2}(t),\nu) +  \mathcal{H}_{h}(t) - \mathcal{H}(\nu | \pi) \\
&\leq \mathcal{H}_{h}(t) - \mathcal{H}(\rho_{h}^{k+1} | \pi) + \delta_{h}^{k+1} \\
& = (1- \ell_{h}(t))\mathcal{H}(\rho_{h}^k | \pi) + \ell_{h}(t)\mathcal{H}(\rho_{h}^{k+1} | \pi) - \mathcal{H}(\rho_{h}^{k+1} | \pi) + \delta_{h}^{k+1} \\
& = (1- \ell_{h}(t))\left(\mathcal{H}(\rho_{h}^k | \pi) -\mathcal{H}(\rho_{h}^{k+1} | \pi) \right)+ \delta_{h}^{k+1} \\
& =\frac{1}{2} R_{h}(t).
\end{align}
\end{proof}

\begin{lemma}\label{lemma:bound_integral}
For any $n\geq 1$, $h >0$ and $k = 0, \dots, n-1$, we have the estimate
\begin{equation}
0\leq \int_0^{(k+1)h}R_{h}(t) dt \leq h\left(\mathcal{H}(\rho_{h}^{0} | \pi) + 2\Delta_{h}^{n}\right).
\end{equation}
\end{lemma}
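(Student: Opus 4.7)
The lower bound $0 \leq \int_0^{(k+1)h} R_h(t)\,dt$ is immediate: the inequality \eqref{eq:bound_wasserstein_consequtive} applied to $\nu = \rho_h^k$ gives $\mathcal{H}(\rho_h^k\mid\pi) - \mathcal{H}(\rho_h^{k+1}\mid\pi) + \delta_h^{k+1} \geq 0$, which together with $\delta_h^{k+1}\geq 0$ and $0 \leq \ell_h(t) \leq 1$ shows that the integrand $R_h(t)$ is pointwise non-negative (as was already noted in the paragraph preceding the lemma).

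For the upper bound, the plan is to split the integral over the intervals $[jh,(j+1)h)$ for $j = 0,\dots,k$ and evaluate each piece directly. On each such interval, $\ell_h$ is the affine function $t\mapsto (t-jh)/h$, so the elementary integrals $\int_{jh}^{(j+1)h}(1-\ell_h(t))\,dt = \int_{jh}^{(j+1)h}\ell_h(t)\,dt = h/2$ give
\begin{equation}
\int_{jh}^{(j+1)h} R_h(t)\,dt = h\bigl(\mathcal{H}(\rho_h^j\mid\pi) - \mathcal{H}(\rho_h^{j+1}\mid\pi)\bigr) + 2h\,\delta_h^{j+1}.
\end{equation}

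Summing over $j = 0,\dots,k$, the relative entropy contributions telescope, yielding
\begin{equation}
\int_0^{(k+1)h} R_h(t)\,dt = h\bigl(\mathcal{H}(\rho_h^0\mid\pi) - \mathcal{H}(\rho_h^{k+1}\mid\pi)\bigr) + 2h\,\Delta_h^{k+1}.
\end{equation}
Finally, since $\mathcal{H}(\cdot\mid\pi) \geq 0$ and $\Delta_h^{k+1} \leq \Delta_h^n$ (each $\delta_h^j$ being non-negative), one concludes
\begin{equation}
\int_0^{(k+1)h} R_h(t)\,dt \leq h\bigl(\mathcal{H}(\rho_h^0\mid\pi) + 2\Delta_h^n\bigr),
\end{equation}
which is the desired bound.

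There is no real obstacle here: the argument is a direct computation of piecewise-affine integrals followed by a telescoping sum and two trivial monotonicity steps. The only thing to be careful about is keeping track of the two different coefficients in the definition of $R_h$ (the $1-\ell_h$ block contains both the entropy decrement and a copy of $\delta_h^{k+1}$, while the $\ell_h$ block contains only $\delta_h^{k+1}$), which is why the final bound carries the factor $2$ in front of $\Delta_h^n$ rather than just $1$.
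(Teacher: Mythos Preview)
Your proof is correct and follows essentially the same approach as the paper: split the integral over the subintervals $[jh,(j+1)h)$, use $\int_{jh}^{(j+1)h}\ell_h = \int_{jh}^{(j+1)h}(1-\ell_h) = h/2$, telescope the entropy terms, and finish with $\mathcal{H}(\cdot\mid\pi)\geq 0$ and $\Delta_h^{k+1}\leq \Delta_h^n$. If anything, your indexing (summing over $j=0,\dots,k$) is cleaner than the paper's, which writes $\sum_{j=0}^{k-1}$ while still arriving at $\mathcal{H}(\rho_h^{k+1}\mid\pi)$ and $\Delta_h^{k+1}$ after telescoping.
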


\begin{proof}
The lower bound follows from $R_{h}(t) \geq 0$ for all $t\in[0,hn]$. Observe that
\begin{equation}
\int_{kh}^{(k+1)h} \ell_{h}(t) dt = \int_{kh}^{(k+1)h} (1 - \ell_{h}(t)) dt = \frac{1}{2}h,
\end{equation}
which in turn implies that
\begin{align}
&\int_0^{(k+1)h}R_{h}(t) dt  = \sum_{j=0}^{k-1} \int_{jh}^{(j+1)h}R_{h}(t) dt \\
& = \sum_{j=0}^{k-1}  h \left(\mathcal{H}(\rho_{h}^{j}| \pi) - \mathcal{H}(\rho_{h}^{j+1} | \pi) + \delta_{h}^{j+1}\right) + \sum_{j=0}^{k-1}  h\delta_{h}^{j+1} \\
&\leq h\left(\mathcal{H}(\rho_{h}^{0} | \pi) - \mathcal{H}(\rho_{h}^{k+1} | \pi) + \Delta_{h}^{k+1}\right) + h \Delta_{h}^{k+1}\\
&\leq h\left(\mathcal{H}(\rho_{h}^{0} | \pi) + 2\Delta_{h}^{n}\right).
\end{align}
\end{proof}

Let $(\gamma_{r}^{j})_{j=0}^m$ denote a trajectory corresponding to another time-step ${r}$, and define the quantities $\gamma_{r}(s), \gamma^{r}(s), \ell_{r}(s), \mathcal{H}_{r}(s)$ and $R_{r}(s)$ analogously to those defined in terms of $h$. Define
\begin{equation}
\was^2_{h, r}(t,s) = (1- \ell_{r}(s))\was^2_{h}(t, \gamma_{r}(s)) + \ell_{r}(s)\was^2_{h}(t, \gamma^{r}(s)),
\end{equation}
and observe that this function is continuous in $t$ and $s$. 

\begin{lemma}\label{lemma:bound_function_with_integral}
For any $n, m \geq 1$, $h, r >0$ and $t \in [0,\min\{hn,rm\}]$,
\begin{equation}
\was^2_{h, r}(t,t) \leq \was_2^2(\rho_{h}^0, \gamma_{r}^0) +  \int_0^t R_{h}(t) +  R_{r}(t) dt.
\end{equation}
\end{lemma}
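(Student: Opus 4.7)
The plan is to adapt the standard "cross-comparison" argument from Theorem 4.1.4 of Ambrosio--Gigli--Savaré, as extended to splitting schemes by Clément--Desch, by applying the discrete evolution variational inequality from Lemma \ref{lemma:Gradient flow approximation} to each of the two discrete solutions using the other as test measure, and then integrating along the diagonal $s = t$.

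First, for fixed $s$ off the $r$-grid, I would apply Lemma \ref{lemma:Gradient flow approximation} to the $h$-solution with test measures $\nu = \gamma_{r}(s)$ and $\nu = \gamma^{r}(s)$ separately. Taking the convex combination of the two resulting inequalities with weights $1 - \ell_{r}(s)$ and $\ell_{r}(s)$, the $\mathcal{H}(\nu|\pi)$ contributions on the right-hand side assemble exactly into $\mathcal{H}_{r}(s)$, yielding, for $(t,s)$ off both grids,
\begin{equation*}
\tfrac{1}{2}\partial_t \was_{h,r}^2(t,s) + \tfrac{\lambda}{2} A_{h}(t,s) \leq \mathcal{H}_{r}(s) - \mathcal{H}_{h}(t) + \tfrac{1}{2}R_{h}(t),
\end{equation*}
where $A_{h}(t,s) \geq 0$ is the corresponding convex combination of the $\was_2^2(\rho^h_{1/2}(t),\cdot)$ terms. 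By the symmetry of the construction of $\was_{h,r}^2$ under swapping $(h,t) \leftrightarrow (r,s)$, the analogous bound on $\tfrac{1}{2}\partial_s \was_{h,r}^2(t,s)$ with the roles reversed holds as well.

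Adding these two inequalities, the mixed entropy terms $\mathcal{H}_{h}(t)$ and $\mathcal{H}_{r}(s)$ cancel exactly, and the non-negative $\lambda$-contributions on the left can be discarded, leaving
\begin{equation*}
\tfrac{1}{2}(\partial_t + \partial_s) \was_{h,r}^2(t,s) \leq \tfrac{1}{2}R_{h}(t) + \tfrac{1}{2}R_{r}(s).
\end{equation*}
I would then restrict to the diagonal $s = t$: on each open interval of $[0, \min\{hn, rm\}]$ whose endpoints lie in the union of grids $\{kh\} \cup \{jr\}$, the map $(t,s) \mapsto \was_{h,r}^2(t,s)$ is bilinear, so $\tau \mapsto \was_{h,r}^2(\tau,\tau)$ is continuous and piecewise affine, hence absolutely continuous, with almost-everywhere derivative equal to $(\partial_t + \partial_s)\was_{h,r}^2$ evaluated on the diagonal. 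Integrating from $0$ to $t$ and using $\was_{h,r}^2(0,0) = \was_2^2(\rho_h^0, \gamma_r^0)$, which is immediate since $\ell_h(0) = \ell_r(0) = 0$, gives the claim.

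The main technical point is the passage from the pointwise partial-derivative inequalities, which are only valid off the grid lines, to the integral inequality along the diagonal; this is handled cleanly by the absolute continuity of the piecewise-affine restriction. Conceptually, the key structural observation making the proof work is that the definition of $\was_{h,r}^2(t,s)$ is set up precisely so that the linear interpolation in the second argument of $\was_h^2(t,\cdot)$ causes the entropies coming from the two EVIs to combine into $\mathcal{H}_{h}(t)$ and $\mathcal{H}_{r}(s)$, which then cancel upon summation. I note that discarding the $\lambda$-terms at this step is exactly the source of the sub-optimal dependence on the strong convexity constant acknowledged in Section 4.2.
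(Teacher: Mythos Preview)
Your proposal is correct and follows essentially the same cross-comparison argument as the paper's proof: apply Lemma~\ref{lemma:Gradient flow approximation} with the other discrete solution as test measure, use the symmetry $\was^2_{h,r}(t,s) = \was^2_{r,h}(s,t)$, add the two inequalities so that the entropy terms cancel (discarding the nonnegative $\lambda$-contributions), and integrate along the diagonal via the fundamental theorem of calculus. One minor slip: the diagonal restriction $\tau \mapsto \was_{h,r}^2(\tau,\tau)$ is piecewise \emph{quadratic}, not piecewise affine (a bilinear form restricted to the diagonal has degree two), but this does not affect the argument since the function is still continuous and piecewise differentiable, which is all that is needed.
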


\begin{proof}
Let $s \in [0,rm]$ and $t\in[0,hn]\setminus \{kh : k = 0,\dots,n\}$. By Lemma \ref{lemma:Gradient flow approximation},
\begin{equation}
\frac{\partial}{\partial t}\frac{1}{2} \was^2_{h, r}(t,s) + \mathcal{H}_{h}(t) - \mathcal{H}_{r}(s) \leq \frac{1}{2}R_{h}(t).
\end{equation}
Similarly, for $s \in [0,rm]\setminus\{jr : j = 0,\dots,m\}$ and $t\in[0,hn]$,
\begin{equation}
\frac{\partial}{\partial s} \frac{1}{2} \was^2_{r, h}(s,t) + \mathcal{H}_{r}(s) - \mathcal{H}_{h}(t) \leq \frac{1}{2} R_{r}(s).
\end{equation}
Note the symmetry
\begin{equation}
\was^2_{h, r}(t,s) = \was^2_{r, h}(s,t),
\end{equation}
so that for $s \in [0,rm]\setminus \{jr : j = 0,\dots,m\}$ and $t\in[0,hn]\setminus\{kh : k = 0,\dots,n\}$,
\begin{equation}
\frac{\partial}{\partial t} \was^2_{h, r}(t,s) + \frac{\partial}{\partial s} \was^2_{h, r}(t,s) \leq R_{h}(t) +  R_{r}(s),
\end{equation}
by adding the inequalities above. Setting $s = t$ and letting $t \in [0,\min\{hn,rm\}]\setminus (\{kh : k = 0,\dots,n\}\cup\{jr : j = 0,\dots,m\})$,
\begin{equation}
\frac{d}{dt} \was^2_{h, r}(t,t) \leq R_{h}(t) +  R_{r}(t).
\end{equation}
Since $t \mapsto \was^2_{h, r}(t,t)$ is continuous and piecewise differentiable, the Fundamental Theorem of Calculus implies that
\begin{align}
\was^2_{h, r}(t,t) &\leq \was^2_{h, r}(0,0) + \int_0^t R_{h}(t) +  R_{r}(t) dt \\
& =  \was_2^2(\rho_{h}^0, \gamma_{r}^0) +  \int_0^t R_{h}(t) +  R_{r}(t) dt.
\end{align}
\end{proof}

\begin{lemma}\label{lemma:bound_wasserstein_two_discretizations}
For any $n, m \geq 1$, $h, r >0$ and $t \in [0,\min\{hn,rm\}]$,
\begin{align}
\begin{split}
\was^2_2 &(\rho^{h}(t), \gamma^{r}(t)) \\
& \leq 6\left[\was_2^2(\rho_{h}^0, \gamma_{r}^0) + {h}\left(\mathcal{H}(\rho_{h}^{0} | \pi) + \Delta_{h}^{n}\right) + {r}\left(\mathcal{H}(\gamma_{r}^{0} | \pi) + \Delta_{r}^{m}\right)  \right].
\end{split}
\end{align}
\end{lemma}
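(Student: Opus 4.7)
The plan is to combine Lemmas 4 and 5 with a triangle-inequality argument that transfers the convex-combination quantity $\was^2_{h,r}(t,t)$ onto the genuine squared distance $\was_2^2(\rho^h(t), \gamma^r(t))$, at the cost of two ``jump'' corrections that are themselves controlled by the discrete EVI.

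First, I would specialize Lemma \ref{lemma:bound_function_with_integral} to $s=t$ and apply Lemma \ref{lemma:bound_integral} separately to $\int_0^t R_h(\tau)\,d\tau$ and $\int_0^t R_r(\tau)\,d\tau$, yielding
\begin{equation*}
\was^2_{h,r}(t,t) \leq \was_2^2(\rho_h^0, \gamma_r^0) + h\bigl(\mathcal{H}(\rho_h^0 \mid \pi) + 2\Delta_h^n\bigr) + r\bigl(\mathcal{H}(\gamma_r^0 \mid \pi) + 2\Delta_r^m\bigr).
\end{equation*}
By construction, $\was^2_{h,r}(t,t)$ is a convex combination (with weights built from $\ell_h(t)$ and $\ell_r(t)$) of the four squared distances $\was_2^2(A,B)$ over $(A,B) \in \{\rho_h(t),\rho^h(t)\} \times \{\gamma_r(t),\gamma^r(t)\}$.

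The core step is then a triangle-inequality decomposition: for each of the four pairs $(A,B)$ the bound $\was_2(\rho^h(t),\gamma^r(t)) \leq \was_2(\rho^h(t),A) + \was_2(A,B) + \was_2(B,\gamma^r(t))$ combined with $(a+b+c)^2 \leq 3(a^2+b^2+c^2)$ produces an estimate for $\was_2^2(\rho^h(t),\gamma^r(t))$ in terms of $\was_2^2(A,B)$ and the jumps $\was_2^2(\rho^h(t),\rho_h(t))$, $\was_2^2(\gamma^r(t),\gamma_r(t))$. Averaging against the convex-combination weights collapses the $\was_2^2(A,B)$ contributions into exactly $\was^2_{h,r}(t,t)$, giving an estimate of the form
\begin{equation*}
\was_2^2(\rho^h(t),\gamma^r(t)) \leq C_1\,\was^2_{h,r}(t,t) + C_2\,\was_2^2(\rho^h(t),\rho_h(t)) + C_3\,\was_2^2(\gamma^r(t),\gamma_r(t)),
\end{equation*}
with small absolute constants.

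For the jump terms, I would invoke the displayed consequence of Lemma \ref{lemma:Discrete Evolution Variation Inequality} obtained by taking $\nu = \rho_h^k$ and telescoping, namely $\sum_{j}\was_2^2(\rho_h^{j+1},\rho_h^j) \leq 2h(\mathcal{H}(\rho_h^0\mid\pi) + \Delta_h^n)$. Since the individual summands are non-negative, each is bounded by this sum, and in particular $\was_2^2(\rho^h(t),\rho_h(t)) \leq 2h(\mathcal{H}(\rho_h^0\mid\pi) + \Delta_h^n)$ for every $t$; the analogous bound holds for $\gamma$. Substituting these estimates together with the bound on $\was^2_{h,r}(t,t)$, and using that $\mathcal{H}(\cdot\mid\pi)\geq 0$ and $\Delta_h^n,\Delta_r^m \geq 0$ to absorb like terms, yields an inequality of the stated form.

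The main obstacle is the constant: a naive application of $(a+b+c)^2 \leq 3(a^2+b^2+c^2)$ combined with the above bounds gives a final factor that is a priori larger than $6$, so to recover the claimed constant one must either track the convex-combination weights more carefully (for example, using a weighted Cauchy--Schwarz of the form $(a+b+c)^2 \leq \alpha^{-1}a^2 + \beta^{-1}b^2 + \gamma^{-1}c^2$ tuned to the values of $\ell_h(t),\ell_r(t)$) or apply the triangle inequality through a cleverly chosen single intermediate measure. Either way, the qualitative content—that $\was_2^2(\rho^h(t),\gamma^r(t))$ is controlled by $\was_2^2(\rho_h^0,\gamma_r^0)$ plus $O(h)$ and $O(r)$ terms—follows from the chain described above, and is sufficient to drive the Cauchy-sequence argument needed for Theorem \ref{theorem:converge_to_gradient_flow}.
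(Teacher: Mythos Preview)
Your proposal is correct and follows essentially the same route as the paper. The paper also writes $\was_2^2(\rho^h(t),\gamma^r(t))$ as a convex combination via the partition $1=(1-\ell_h)(1-\ell_r)+(1-\ell_h)\ell_r+\ell_h(1-\ell_r)+\ell_h\ell_r$, applies the triangle inequality in each of the four pieces to route through the appropriate pair $(A,B)\in\{\rho_h^k,\rho_h^{k+1}\}\times\{\gamma_r^j,\gamma_r^{j+1}\}$, and arrives at exactly
\[
\was_2^2(\rho^h(t),\gamma^r(t)) \leq 3\,\was^2_{h,r}(t,t) + 3(1-\ell_h(t))\,\was_2^2(\rho_h^{k+1},\rho_h^k) + 3(1-\ell_r(t))\,\was_2^2(\gamma_r^{j+1},\gamma_r^j),
\]
which is precisely what your averaging argument gives. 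The jump terms are then bounded via the telescoped consequence of Lemma~\ref{lemma:Discrete Evolution Variation Inequality}, just as you describe.

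Your hesitation about the constant is well placed: plugging in the bounds from Lemmas~\ref{lemma:bound_integral} and~\ref{lemma:bound_function_with_integral} and the jump estimate actually yields coefficients $3$, $9$, and $12$ on $\was_2^2(\rho_h^0,\gamma_r^0)$, $h\,\mathcal{H}(\rho_h^0\mid\pi)$, and $h\,\Delta_h^n$ respectively, so a uniform constant of $6$ is not what this argument delivers. The paper simply writes $6$ with the caveat ``without optimizing the constant''; the precise value is immaterial for the Cauchy-sequence argument in Theorem~\ref{theorem:converge_to_gradient_flow}, so you need not pursue the weighted Cauchy--Schwarz refinement you mention.
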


\begin{proof}
Suppose $j$ and $k$ are such that $t \in [kh, (k+1)h) \cap  [jr, (j+1)r)$. Then,
\begin{align*}
&\was^2_2 (\rho^{h}(t), \gamma^{r}(t))  = \was_2^2(\rho_{h}^{k+1},\gamma_{r}^{j+1}) \\
&=  (1- \ell_{h}(t))(1- \ell_{r}(t))\was_2^2(\rho_{h}^{k+1},\gamma_{r}^{j+1})  \\
&\quad +(1- \ell_{h}(t))\ell_{r}(t)\was_2^2(\rho_{h}^{k+1},\gamma_{r}^{j+1})   \\
&\quad + \ell_{h}(t) (1- \ell_{r}(t))\was_2^2(\rho_{h}^{k+1},\gamma_{r}^{j+1})  \\
&\quad + \ell_{h}(t)\ell_{r}(t) \was_2^2(\rho_{h}^{k+1},\gamma_{r}^{j+1}) \\
&\leq  3(1- \ell_{h}(t))(1- \ell_{r}(t))\left[ \was_2^2(\rho_{h}^{k+1},\rho_{h}^{k}) + \was_2^2(\rho_{h}^{k},\gamma_{r}^{j}) + \was_2^2(\gamma_{r}^{j+1},\gamma_{r}^{j})\right]  \\
&\quad +2(1- \ell_{h}(t))\ell_{r}(t)\left[ \was_2^2(\rho_{h}^{k+1},\rho_{h}^{k}) + \was_2^2(\rho_{h}^{k},\gamma_{r}^{j+1})\right]  \\
&\quad + 2\ell_{h}(t) (1- \ell_{r}(t))\left[ \was_2^2(\gamma_{r}^{j+1},\gamma_{r}^{j}) + \was_2^2(\rho_{h}^{k+1},\gamma_{r}^{j}) \right]  \\
&\quad + \ell_{h}(t)\ell_{r}(t) \was_2^2(\rho_{h}^{k+1},\gamma_{r}^{j+1}) \\
&\leq  3(1- \ell_{h}(t))(1- \ell_{r}(t))\left[ \was_2^2(\rho_{h}^{k+1},\rho_{h}^{k}) + \was_2^2(\rho_{h}^{k},\gamma_{r}^{j}) + \was_2^2(\gamma_{r}^{j+1},\gamma_{r}^{j})\right]  \\
&\quad +3(1- \ell_{h}(t))\ell_{r}(t)\left[ \was_2^2(\rho_{h}^{k+1},\rho_{h}^{k}) + \was_2^2(\rho_{h}^{k},\gamma_{r}^{j+1})\right]  \\
&\quad + 3\ell_{h}(t) (1- \ell_{r}(t))\left[ \was_2^2(\gamma_{r}^{j+1},\gamma_{r}^{j}) + \was_2^2(\rho_{h}^{k+1},\gamma_{r}^{j}) \right]  \\
&\quad + 3\ell_{h}(t)\ell_{r}(t) \was_2^2(\rho_{h}^{k+1},\gamma_{r}^{j+1}) \\
& =  3\was^2_{h, r}(t,t) + 3(1- \ell_{h}(t))\was_2^2(\rho_{h}^{k+1},\rho_{h}^{k}) + 3(1- \ell_{r}(t)) \was_2^2(\gamma_{r}^{j+1},\gamma_{r}^{j}).
\end{align*}
Now, by Lemmas \ref{lemma:bound_function_with_integral} and \ref{lemma:bound_integral}, 
\begin{align}
\was^2_{h, r}(t,t) &\leq \was_2^2(\rho_{h}^0, \gamma_{r}^0) +  \int_0^t R_{h}(t) +  R_{r}(t) dt\\
&\leq \was_2^2(\rho_{h}^0, \gamma_{r}^0) + {h}\left(\mathcal{H}(\rho_{h}^{0} | \pi) + 2\Delta_{h}^{n}\right) + {r}\left(\mathcal{H}(\gamma_{r}^{0} | \pi) + 2\Delta_{r}^{m}\right).
\end{align}
Lastly, we know by Lemma \ref{lemma:Discrete Evolution Variation Inequality} that
\begin{align}
\was_2^2(\rho_{h}^{k+1},\rho_{h}^{k}) &\leq  2{h}\left(\mathcal{H}(\rho_{h}^{0}| \pi)  + \Delta_{h}^{n}\right),\\
\was_2^2(\gamma_{r}^{j+1},\gamma_{r}^{j}) &\leq  2{r}\left(\mathcal{H}(\gamma_{r}^{0}| \pi)  + \Delta_{r}^{m}\right).
\end{align}
In conclusion, and without optimizing the constant, we get
\begin{align}
\begin{split}
\was^2_2 &(\rho^{h}(t), \gamma^{r}(t)) \\
& \leq 6\left[\was_2^2(\rho_{h}^0, \gamma_{r}^0) + {h}\left(\mathcal{H}(\rho_{h}^{0} | \pi) + \Delta_{h}^{n}\right) + {r}\left(\mathcal{H}(\gamma_{r}^{0} | \pi) + \Delta_{r}^{m}\right)  \right].
\end{split}
\end{align}
\end{proof}

Before giving its proof, we restate the main theorem of the paper:
\begin{customthm}{1}
Let $(\rho^{h_m}(t))_{m\geq1}$ be a sequence of discrete solutions generated from $\rho_0$, such that $h_m\Delta_{h_m}^m \to 0$ and $h_m m \to T$ for some $T >0$, as $m \to \infty$. Then, $\rho^{h_m}(t)$ converges uniformly on $[0,T]$ to $\rho(t)$, the gradient flow of $\rho\mapsto\mathcal{H}(\rho | \pi)$ started from $\rho_0$, as $m\to \infty$. Moreover, if $h >0$ and $n\geq 1$ are such that $hn \leq T$, then for any $t \in [0,hn]$, 
\begin{equation}
\was_2 (\rho^{h}(t), \rho(t)) \leq \sqrt{6h \left(\mathcal{H}(\rho_{0} | \pi) + \Delta_{h}^{n}\right)}.
\end{equation}
\end{customthm}

\begin{proof}
Let the discrete solutions $\rho^{h_n}(t)$ and $\rho^{h_m}(t)$ be members of the sequence. From Lemma \ref{lemma:bound_wasserstein_two_discretizations}, we know that $\was^2_2(\rho^{h_m}(t), \rho^{h_n}(t)) \to 0$ as $m, n \to \infty$, for any $t\in[0,T]$. This implies that $(\rho^{h_m}(t))_{m\geq1}$ is a Cauchy sequence. Since $(\mathcal{P}_2(\mathbb{R}^d), \was_2)$ is complete, this means that the sequence converges to a function $\rho(t)$. Since the bound in Lemma \ref{lemma:bound_wasserstein_two_discretizations} does not depend on $t$, this convergence is uniform on $[0,T]$.

Since the convergence is uniform and $\rho^{h_n}(t)$ is left continuous, then so is the limit $\rho(t)$. Moreover, since if $t\in[kh,(k+1)h)$ for some $k = 0,\dots,n-1$, 
\begin{equation}
\was_2^2(\rho^{h_n}(t),\rho_{h_n}(t)) \leq \was_2^2(\rho_{h_n}^{k+1},\rho_{h_n}^{k}) \leq 2h_n\left(\mathcal{H}(\rho_{0}| \pi)  + \Delta_{h_n}^{n}\right) \to 0 \quad \text{as $n \to \infty$}.
\end{equation}
 Hence, $\rho_{h_n}(t)$ converges to $\rho(t)$ in the same manner as $\rho^{h_n}(t)$, meaning that the limit $\rho(t)$ is right continuous also. Combining these facts, it is clear that $\rho(t)$ is continuous. 

Similarly,
\begin{equation}
\was_2^2(\rho_{h_n}(t),\rho^{h_n}_{1/2}(t)) \leq \was_2^2(\rho_{h_n}^{k+1/2},\rho_{h_n}^{k}) \leq 2h_n\left(\mathcal{H}(\rho_{0}| \pi)  + \Delta_{h_n}^{n}\right) \to 0 \quad \text{as $n \to \infty$},
\end{equation}
by the bound in \eqref{eq:bound_wasserstein_consequtive_half}. This implies that $\rho^{h_n}_{1/2}(t)$ converges to $\rho(t)$ in the same manner as $\rho_{h_n}(t)$ and $\rho^{h_n}(t)$.

It remains to show that $\rho(t)$ is the gradient flow of $\rho\mapsto \mathcal{H}(\rho|\pi)$. Indeed, let $f \in C_c^\infty((0,\infty); \mathbb{R})$ be non-negative and $\nu \ll \mu_{\text{Leb}}$. Note that $\lim_{n \to \infty} \was_{h_n}^2(t,\nu) = \was_2^2(\rho(t),\nu)$ uniformly on $[0,T]$. Since $t \mapsto \was_{h_n}^2(t,\nu)$ is continuous, so is the limit $\was_2^2(\rho(t),\nu)$. Thus, $t \mapsto f^\prime(t) \was_2^2(\rho(t),\nu)$ is continuous, i.e. integrable, on [0,T]. The continuity of $f^\prime$ implies that there exists an $M>0$ such that $|f^\prime(t)| \leq M$. In combination with the aforementioned uniform convergence, we know that 
\begin{equation}
\lim_{n \to \infty} \int_0^T f^\prime(t) \was_{h_n}^2(t,\nu)dt = \int_0^T f^\prime(t) \was_2^2(\rho(t),\nu)dt.
\end{equation}
By the same reasoning, and the fact that $\lim_{n \to \infty} \was^2_2(\rho_{1/2}^{h_n}(t),\nu) = \was_2^2(\rho(t),\nu)$ uniformly on $[0,T]$, we have
\begin{equation}
\lim_{n \to \infty} \int_0^T f(t) \was^2_2(\rho_{1/2}^{h_n}(t),\nu)dt = \int_0^T f(t) \was_2^2(\rho(t),\nu)dt.
\end{equation}

Now, since $f$ and $\mathcal{H}(\cdot | \pi)$ are non-negative, so is the function $t \mapsto f(t)\mathcal{H}_{h}(t)$. Thus, by Fatou's lemma,
\begin{equation}
\liminf_{n \to \infty} \int_0^T f(t)\mathcal{H}_{h_n}(t) dt \geq \int_0^T \liminf_{n \to \infty} f(t)\mathcal{H}_{h_n}(t)dt. \label{eq:fatous_lemma}
\end{equation}
By Lemma 2.8 of \citet{clement2011trotter}, 
\begin{equation}
\int_0^T \liminf_{n \to \infty} f(t)\mathcal{H}_{h_n}(t) dt \geq \int_0^T f(t)\mathcal{H}(\rho(t) | \pi) dt \label{eq:lemma2.8}.
\end{equation}
So,
\begin{align}
\int_0^T &\left[-f^\prime(t)\frac{1}{2}\was_2^2(\rho(t),\nu) +f(t)\frac{\lambda}{2}\was_2^2(\rho(t),\nu) + f(t)\mathcal{H}(\rho(t) | \pi) \right]dt \\
&\leq\liminf_{n \to \infty} \int_0^T \left[-f^\prime(t)\frac{1}{2}\was_{h_n}^2(t,\nu)+f(t)\frac{\lambda}{2}\was_2^2(\rho_{1/2}^{h_n}(t),\nu)  + f(t)\mathcal{H}_{h_n}(t)  \right]dt  \label{subeq:first}\\
&= \liminf_{n \to \infty} \int_0^T\left[ f(t)\frac{d}{dt}\frac{1}{2}\was_{h_n}^2(t,\nu) + f(t)\frac{\lambda}{2}\was_2^2(\rho_{1/2}^{h_n}(t),\nu)  + f(t)\mathcal{H}_{h_n}(t) \right] dt \label{subeq:second} \\
&\leq \liminf_{n \to \infty} \int_0^T f(t)\left[ \frac{1}{2}R_{h_n}(t) + H(\nu|\pi)\right] dt \label{subeq:third}\\
&= \int_0^T f(t)H(\nu|\pi)dt + \liminf_{n \to \infty} \int_0^T  f(t)\frac{1}{2}R_{h_n}(t)dt \label{subeq:fourth} \\
&\leq \int_0^T f(t)H(\nu|\pi)dt +\sup_{t\in[0,T]}f(t)\liminf_{n \to \infty} \int_0^T \frac{1}{2}R_{h_n}(t)dt \label{subeq:fifth}\\
& \leq \int_0^T f(t)H(\nu|\pi)dt +\sup_{t\in[0,T]}f(t)\liminf_{n \to \infty}\left[\frac{1}{2}h_n(\mathcal{H}(\rho_0 | \pi) + 2\Delta_{h_n}^n)\right] \label{subeq:sixth}\\
& =  \int_0^T f(t)H(\nu|\pi)dt \label{subeq:seventh},
\end{align}
where \eqref{subeq:first} follows from \eqref{eq:fatous_lemma} and \eqref{eq:lemma2.8}, \eqref{subeq:second} follows by integration by parts, \eqref{subeq:third} follows by Lemma \ref{lemma:Gradient flow approximation}, \eqref{subeq:fifth} follows by $f$ being non-negative and continuous, and $R_{h_n}(t) \geq 0$, \eqref{subeq:sixth} follows by Lemma \ref{lemma:bound_integral}, and \eqref{subeq:seventh} follows by the assumption. This concludes the proof that $\rho(t)$ is indeed the gradient flow.

Now, fix $h>0$ and $n\geq 1$ such that $hn \leq T$. Then, for any $m\geq1$,
\begin{equation}
\was^2_2(\rho^{h}(t), \rho^{h_m}(t))  \leq 6\left[h\left(\mathcal{H}(\rho_{0} | \pi) + \Delta_{h}^{n}\right) + {h_m}\left(\mathcal{H}(\rho_0 | \pi) + \Delta_{h_m}^{m}\right)  \right],
\end{equation}
for any $t\in[0,\min\{hn,h_m m\}]$ by Lemma \ref{lemma:bound_wasserstein_two_discretizations}. Taking $m \to \infty$ yields the conclusion.

\end{proof}

\section{Rates for $\was_2(\rho_0,\pi)$ and $\mathcal{H}(\rho_0 | \pi)$} \label{appendix:rates}
In this section, we provide some heuristic support for the claim that one can often assume that $\mathcal{H}(\rho_0 | \pi) = \mathcal{O}(d)$ and $\was_2(\rho_0,\pi) = \mathcal{O}(\sqrt{d})$. These assumptions can also be shown to be hold for more general settings than those we consider below. 

Let $\rho_0(x) = Z_0^{-1} e^{-V_0(x)}$, and note that
\begin{align*}
\was^2_2(\rho_0,\pi)  & = \inf_{\gamma \in \Gamma(\rho_0,\pi)} \int_{\mathbb{R}^d\times \mathbb{R}^d} \|x-y\|^2 d\gamma(x,y) \\
& \leq \int_{\mathbb{R}^d} \int_{\mathbb{R}^d} \|x-y\|^2 d\pi(x)d\rho_0(y)\\
& = \int_{\mathbb{R}^d} \|x-\bar{x}\|^2 d\pi(x) + \int_{\mathbb{R}^d} \|y-\bar{y}\|^2 d\rho_0(y) + \|\bar{x} - \bar{y}\|^2,
\end{align*}
and where $\bar{x}$ and $\bar{y}$ are the means of $\pi$ and $\rho_0$ respectively. The third term on the last line safely be assumed to be $\mathcal{O}(d)$. By Theorem 1 of \citet{durmus2016high}, the first term can be bounded by $d/\lambda$ under the $\lambda$-strong convexity assumption. Under similar assumptions on $\rho_0$, or e.g. assuming that $V_0(x) = \sum_{i=1}^d V_0^i(x_i)$, one can also defend imposing a bound of $\mathcal{O}(d)$ for second term.

Secondly, one can easily support the assumption $\mathcal{H}(\rho_0 | \pi) = \mathcal{O}(d)$ if both $V_0(x) = \sum_{i=1}^d V_0^i(x_i)$ and $V(x) = \sum_{i=1}^d V^i(x_i)$. A less restrictive condition is to assume that $0\leq V(x) - V_0(x) \leq a\|x\|^2 +b$ for some $a\geq 0$ and $b\in\mathbb{R}$ not dependent on $d$. The first inequality is analogous to saying that $\rho_0$ has heavier tails than $\pi$, whereas the second inequality constrains exactly how much heavier these tails can be. Under this assumption, and using the proof of Lemma 3 of \citet{dalalyan2014theoretical}, we can write
\begin{align*}
\mathcal{H}(\rho_0 | \pi)  & =  \int_{\mathbb{R}^d}\log \left(\frac{\rho_0}{\pi}\right) d\rho_0\\
& = \int_{\mathbb{R}^d} \left[V(x) - V_0(x)\right]d\rho_0 + \log\left( \int_{\mathbb{R}^d} e^{V_0(x) - V(x)}d\rho_0\right) \\
& \leq \int_{\mathbb{R}^d} \left(a\|x\|^2 + b\right) d\rho_0,
\end{align*}
by noting that $e^{V_0(x) - V(x)}\leq 1$ by the assumption. One can then proceed as in the last paragraph.

\subsection{Gaussian initial distribution}
Let $x^\star$ denote the minimum of $V$, and let $V_0(x) = \frac{\alpha}{2}\|x-\mu\|^2 + V(x^\star)$ with $\alpha < M(d)$, so that $\rho_0$ is a Gaussian distribution. We focus on bounding $\mathcal{H}(\rho_0 | \pi)$, as bounding the Wasserstein distance can be done as in the previous section. Then, using strong convexity,  \eqref{eq:grad_f_lip} and \eqref{eq:g_lip},
\begin{align*}
V(x) &\leq V(x^\star) + L(d)\|x - x^\star\| +\nabla f(x^\star)^\top(x-x^\star) + \frac{M(d)}{2}\|x-x^\star\|^2, \\
V(x) &\geq V(x^\star) - L(d)\|x - x^\star\| +\nabla f(x^\star)^\top(x-x^\star) + \frac{\lambda}{2}\|x-x^\star\|^2,
\end{align*}
so that
\begin{align*}
&\int_{\mathbb{R}^d} \left[V(x) - V_0(x)\right]d\rho_0 \\
&= \int_{\mathbb{R}^d} \left[ \frac{M(d)}{2}\|x-x^\star\|^2 - \frac{\alpha}{2}\|x-\mu\|^2 + L(d)\|x - x^\star\| +\nabla f(x^\star)^\top(x-x^\star) \right]d\rho_0\\
&\leq  \frac{M(d)}{2}\|\mu-x^\star\|^2 +  \frac{M(d) d}{2\alpha} - \frac{\alpha d}{2\alpha} +\nabla f(x^\star)^\top(\mu-x^\star) + L(d)\left(\int_{\mathbb{R}^d} \|x - x^\star\|^2 d\rho_0\right)^{1/2}  \\
&\leq  \frac{M(d)}{2}\|\mu-x^\star\|^2 +  \frac{(M(d)-\alpha) d}{2\alpha} +\nabla f(x^\star)^\top(\mu-x^\star) + L(d)\left(\|\mu-x^\star\|^2 +\frac{d}{\alpha}\right)^{1/2}, 
\end{align*}
and
\begin{align*}
\log \int_{\mathbb{R}^d} e^{V_0(x) - V(x)} d\rho_0 &\leq \log\left(\frac{1}{Z_{1/\alpha}} \int_{\mathbb{R}^d} e^{-\frac{\lambda}{2}\|x-x^\star\|^2+L(d)\|x - x^\star\| - \nabla f(x^\star)^\top(x-x^\star)} dx \right)\\
&\leq \log\left(\frac{1}{Z_{1/\alpha}} \int_{\mathbb{R}^d} e^{-\frac{\lambda}{2}\|x-x^\star\|^2+(L(d)+\|\nabla f(x^\star)\|)\|x - x^\star\|} dx \right)\\
&=\log\left(\frac{1}{Z_{1/\alpha}} \int_{\mathbb{R}^d} e^{-\frac{\lambda}{2}\|x-x^\star\|^2+ c\|x - x^\star\|} dx \right),
\end{align*}
where $c = L(d)+\|\nabla f(x^\star)\|$ and $Z_{1/\alpha} = \int_{\mathbb{R}^d} e^{-\frac{\alpha}{2}\|x-\mu\|^2}dx$. Furthermore,
\begin{align*}
\log\left(\frac{1}{Z_{1/\alpha}} \int_{\mathbb{R}^d} e^{-\frac{\lambda}{2}\|x\|^2+ c\|x\|} dx \right) &\leq \log\left(\frac{1}{Z_{1/\alpha}} \int_{\mathbb{R}^d} e^{-\frac{\lambda}{4}\|x\|^2+ \frac{c^2}{\lambda}} dx \right) \\
& = \log\left(\frac{Z_{2/\alpha}}{Z_{1/\alpha}} e^{\frac{c^2}{\lambda}}\right)\\
& = \frac{d}{2}\log(2) + \frac{(L(d)+\|\nabla f(x^\star)\|)^2}{\lambda}\\
& \leq \frac{d}{2}\log(2) + \frac{(L(d)+M(d)\|x^\star-x^f\|)^2}{\lambda},
\end{align*}
where $x^f$ is the minimum of $f$. Hence,
\begin{align*}
\mathcal{H}(\rho_0 | \pi) \leq  \frac{M(d)}{2}&\|\mu -x^\star\|^2 +  \frac{(M(d)-\alpha) d}{2\alpha} +\|x^\star-x^f\|\|\mu-x^\star\| + L(d)\left(\|\mu-x^\star\|^2 +\frac{d}{\alpha}\right)^{1/2} \\
&+ \frac{d}{2}\log(2) + \frac{(L(d)+M(d)\|x^\star-x^f\|)^2}{\lambda}. 
\end{align*}
Take $\alpha = \lambda$ and $\mu$ such that $\|\mu -x^\star\|^2 = \mathcal{O}(d)$, and make the safe assumption that $\|x^\star-x^f\|^2 = \mathcal{O}(d)$. If $M(d) = \mathcal{O}(1)$ and $L(d)=\sqrt{d}$ like in Section \ref{seq:explicit_rates}, we get $\mathcal{H}(\rho_0 | \pi)  = \mathcal{O}(d)$.

\end{document}